\DeclareMathOperator*{\argmin}{argmin} 
\DeclareMathAlphabet\mathrsfso      {U}{rsfso}{m}{n}
\newcommand*{\algrule}[1][\algorithmicindent]{\makebox[#1][l]{\hspace*{.5em}\vrule height .75\baselineskip depth .25\baselineskip}}
\def\ALG@printindent{
    \ifnum \theALG@nested>0
        \ifx\ALG@text\ALG@x@notext
            \addvspace{-3pt}
        \else
            \unskip
            \ALG@printindent@tempcnta=1
            \loop
                \algrule[\csname ALG@ind@\the\ALG@printindent@tempcnta\endcsname]
                \advance \ALG@printindent@tempcnta 1
            \ifnum \ALG@printindent@tempcnta<\numexpr\theALG@nested+1\relax
            \repeat
        \fi
    \fi
    }
\patchcmd{\ALG@doentity}{\noindent\hskip\ALG@tlm}{\ALG@printindent}{}{\errmessage{failed to patch}}
\newtheorem{lemma}{Lemma}
\newtheorem{definition}{Definition}
\newtheorem{proposition}{Proposition}
\newtheorem{theorem}{Theorem}
\newtheorem{corollary}{Corollary}
\newtheorem*{proposition*}{Proposition 3}
\begin{document}

\title{Relation between quantum advantage in supervised learning and quantum computational advantage}

\author{Jordi Pérez-Guijarro$^\dagger$, Alba Pagès-Zamora$^\dagger$, and Javier R. Fonollosa$^\dagger$}

\date{\small $^\dagger$Departament de Teoria del Senyal i Comunicacions,\\ Universitat Politècnica de Catalunya, Barcelona, Spain}

\maketitle

\begin{abstract}
      The widespread use of machine learning has raised the question of quantum supremacy for supervised learning as compared to quantum computational advantage. In fact, a recent work shows that computational and learning advantage are, in general, not equivalent, i.e., the additional information provided by a training set can reduce the hardness of some problems. This paper investigates under which conditions they are found to be equivalent or, at least, highly related. The existence of efficient algorithms to generate training sets emerges as the cornerstone of such conditions. These results are applied to prove that there is a quantum speed-up for some learning tasks based on the prime factorization problem, assuming the classical intractability of this problem.
\end{abstract}

\lettrine[loversize=0.17]{S}{ince} the origin of the quantum computing field, the study of applications that exhibit a quantum speed-up or advantage has played a fundamental role. This can be observed in the seminal works of Feynman\cite{feynman2018simulating} and Deutsch\cite{deutsch1985quantum}, and the subsequent development of quantum algorithms such as Simon's\cite{simon1997power}, Shor's\cite{shor1999polynomial}, and Grover's \cite{grover1996fast}. This endeavor to understand the advantages of quantum computing extends to the present, where there are ongoing efforts \cite{arute2019quantum,zhong2020quantum,madsen2022quantum} to experimentally demonstrate a quantum advantage using noisy intermediate-scale quantum devices through boson sampling \cite{aaronson2011computational} and random circuit sampling \cite{bouland2018quantum}. Additionally, significant attention has been devoted to the design of quantum algorithms for machine learning over the past decade.

The design of these latter algorithms has been explored in various computational settings, utilizing a diverse range of techniques. Some approaches focus on classical data, while others incorporate quantum data \cite{servedio2004equivalences}. In the former case, some results like the HHL algorithm \cite{harrow2009quantum}, which is used in linear systems of equations, and quantum principal component analysis \cite{lloyd2014quantum} seemingly provide an exponential advantage, but the end-to-end advantage is arguable considering the dequantization procedure proposed in \cite{tang2019quantum}. Nonetheless, certain results point to a quantum advantage in supervised learning using mainly two different strategies. The first one improves the optimization process that takes place in supervised learning \cite{rebentrost2014quantum}. This technique usually relies on Grover's algorithm \cite{grover1996fast}, and therefore, the advantage is of quadratic order. In the second approach, a quantum variational circuit replaces the classical model, e.g., a neural network. Indeed, existing theoretical analyses of this perspective \cite{abbas2021power,A_rigorous} suggest that quantum models can outperform classical ones.  Authors in \cite{abbas2021power} conclude that quantum models provide an advantage in terms of trainability and capacity, where the capacity measure is based on the Fisher information matrix. In \cite{A_rigorous}, an exponential quantum advantage for a learning task is proved under the widely accepted assumption that a classical algorithm cannot solve the discrete-logarithm problem efficiently.

Although these encouraging results suggest potential benefits of using quantum algorithms in supervised learning, it remains unclear for which learning tasks these models yield a quantum advantage. Intuitively, one might assume that such advantages arise when attempting to learn functions for which there is a quantum speed-up. However, as shown in \cite{huang2021power} through a counter-example, this statement is not true in general. This highlights the complexity of identifying the specific conditions under which quantum algorithms truly outperform their classical counterparts in supervised learning.

In \cite{gyurik2022establishing}, the authors shed light on certain aspects of quantum advantage in supervised learning. Notably, they identify several examples of learning speed-ups where the classical hardness come from training the model instead of evaluating the resulting model. Furthermore, they present a method that takes a \textbf{BQP-complete} problem and constructs a learning task that exhibits a quantum separation.

In this work we conduct a rigorous analysis of the relationship between quantum advantage in both learning and computational tasks. To achieve this, we depart from the traditional statistical learning framework, typically employed in supervised learning, and adopt a complexity theory perspective. Using this framework, we establish the equivalence between computational and learning speed-ups when the training set can be generated classically. This result is used to prove a quantum advantage in some learning tasks related to the prime factorization problem, under the commonly believed assumption that a classical algorithm cannot solve the prime factorization problem efficiently.

The paper is structured as follows. Section \ref{supervised_learning_and_learning} provides an introduction to the concept of supervised learning, along with the notation used. Next, in Section \ref{quantum_speed_up_in_supervised_learning}, we delve into the topic of quantum speed-up in supervised learning, presenting and discussing two distinct definitions: one associated with a fixed distribution and another independent of the distribution used. Section \ref{sec:realtion_learning_and_computational} explores the relationship between quantum advantage in learning and computational tasks. In this section, we establish that when the training set can be efficiently generated by a classical algorithm, then both concepts are equivalent. This section also shows several examples of learning tasks that exhibit a learning speed-up, which are based on the prime factorization problem. In Section \ref{sec:invariant_functions}, we explore the stronger notion of quantum speed-up introduced in Section \ref{quantum_speed_up_in_supervised_learning}, which is independent of the distribution. To conduct this analysis, we introduce the concept of invariance property. Conclusions are drawn in Section \ref{conclusions}.

\section{Supervised Learning and Learning Procedures}\label{supervised_learning_and_learning}

The problem of supervised learning consists in learning a function $f:\mathcal{X}\rightarrow \mathcal{Y}$ given a training set $\mathcal{T}_n=\{(x_i,y_i)\}_{i=1}^n$ of $n$ sample pairs, where $x_i\in \mathcal{X}$ and $y_i\in \mathcal{Y}$. In our setup, $f(x)$ is a deterministic function, $x_i \in \mathcal{X}$ are realizations of a random variable (r.v.) $X\sim D$, which distribution $D$ is unknown, and the domain is countably infinite, i.e., there exists a bijection from $\mathcal{X}$ to $\mathbb{N}$. Hereafter, we refer to the pair $(f, D)$ as a \emph{learning task}. 

A learning task is solved once a function $\hat{f}:\mathcal{X}\rightarrow \mathcal{Y}$ that approximates $f(x)$ is found for all $x \in \mathrm{supp}(D)$, where $\mathrm{supp}(D)$ denotes the support of $D$. The inferred function $\hat{f}(x)$ belongs to a set of parameterized functions $\mathcal{F}$ called the \emph{hypothesis class}. Specifically, $\mathcal{F}:=\{\mathcal{M}_\theta :\theta \in \mathcal{P}\}$ consists of the functions expressed by a model $\mathcal{M}_\theta$ with parameters $\theta \in \mathcal{P}$, being $\mathcal{P}$ the parameter space. The parameters $\theta\in \mathcal{P}$ are selected through a   learning algorithm, denoted by $\mathcal{A}$. The latter is usually implemented by minimizing a certain cost function on the input training set $\mathcal{T}_n$. A widely used cost function is the \textit{empirical} risk given by $\hat{R}(h)=\frac{1}{n}\sum_{i=1}^n l(h(x_i),y_i)$,  where $l:\mathcal{Y}\times \mathcal{Y}\rightarrow \mathbb{R}^+$ is a loss function and $h \in \mathcal{F}$. That is,
\begin{align*}
\hat{f}=\argmin_{h \in \mathcal{F}}\hat{R}(h)%
\end{align*}
Note that the risk $R(h)=\mathbb{E}_X \left[ l(h(x),y)\right]$ cannot be used as cost function in the optimization process since both $D$ and $f(x)$ are unknown, prompting the utilization of its estimate, the empirical risk. In this paper, we focus instead on a probabilistic criterion to measure the approximation error.

The end-to-end procedure characterized by a learning algorithm and a model, which given input $(x,\mathcal{T}_n)$ outputs an  estimate $\hat{f}(x)$, is referred hereafter as a \emph{learning procedure}.
Note that the Church-Turing thesis guarantees that any algorithm running on a quantum computer can also be implemented by a classical one. Still, their respective complexity might be drastically different making one of them, or even both, unfeasible in practice. For that reason, we restrict our attention to classical or quantum \emph{efficient} learning procedures, defined as mappings $M(x,\mathcal{T}_n)=\hat{f}(x)$ where the running time of the procedure is polynomial in the input size, i.e., $b(\mathcal{T}_n)+b(x)$. The function $b(\cdot)$ yields the number of significant bits required to represent its argument, e.g., $b(z)=\left \lceil  \log_2 z \right \rceil$ if $z$ is a positive integer.

As we assume that $x\in \mathrm{supp}(D)$, the size of $x$ is upper bounded by $m:= \max_{x\in \mathrm{supp}(D)}b(x)$. On the other hand, the size of a
training set with $n$ sample pairs is equal to $b(\mathcal{T}_n)=O(n (m+m^u))=O(n m^{\max\{1,u\}})$, where $f(x)$ is assumed to satisfy $b(f(x))=b(y)\leq b(x)^u$ for a certain $u\in \mathbb{R}^+$ \footnote{The condition $b(f(x))\leq b(x)^u$ is satisfied by any function that can be computed in polynomial time.}. Therefore, the input size of a learning procedure that uses a training set $\mathcal{T}_n$ grows linearly with $n$. The latter implies that an efficient learning procedure runs in polynomial time w.r.t.$\,m$ whenever the number of samples of the training set $n$ is also a polynomial in $m$. Therefore, hereafter we restrict $n$ to be a polynomial in $m$.

\section{Quantum Speed-up in Supervised Learning}\label{quantum_speed_up_in_supervised_learning}

Next, we present the notion of quantum advantage in supervised learning used in this work. A complexity theory point of view is employed, i.e., we focus on the behavior of the number of operations and the error as the input size $m$ grows. 

In particular, we analyze the behavior of learning tasks over sequences of distributions $\mathcal{D}=\{D_m\}_{m=1}^\infty$ such that $\mathrm{supp}(D_m)=\{x\in \mathcal{X}: b(x)\leq m\}$. Since $\mathrm{supp}(D_m)\subset \mathrm{supp}(D_{m+1})$, the learning tasks become more involved as $m$ increases. Note also that different functions may have different domains, and therefore, the distributions are defined on different supports. However, since $\mathcal{X}$ is countably infinite, we can assume without loss of generality that the support of $D_m$ is formed by a subset of binary sequences of at most length $m$, i.e., $\mathrm{supp}(D_m)\subseteq \bigcup_{i\leq m} \{0,1\}^i$.

\begin{definition}\label{def:Learning_classical_quantum}
A function $f(x)$ is classical (quantum) $\eta$-learnable, denoted as $f\in \mathcal{L}_C^{\,\eta}$ $(\mathcal{L}_Q^{\,\eta})$, where $\eta:\mathbb{N}\rightarrow \mathbb{R}^+$, iff there exists a classical (quantum) learning procedure, a sequence of distributions $\mathcal{D}=\{D_m\}_{m=1}^\infty$, and a polynomial $p(m)$ such that:

        Given input $x$ and $\mathcal{T}_{n}=\{(x_i,f(x_i))\}_{i=1}^{p(m)}$, where $x_i$ is distributed according to $D_m$, the classical (quantum) learning procedure computes an estimate $\hat{f}(x)$ in polynomial time in $m$ such that 
    \begin{equation}
        \mathrm{Prob}\left( |\hat{f}(x)-f(x)|<c\,\eta(m)\right)>1-\delta.
    \end{equation}
    for some constants $c\in \mathbb{R}^+$ and $\delta\in[0,\frac{1}{2})$, and for all $x\in \mathrm{supp}(D_m)$. 

\end{definition}

Therefore, if a function is $\eta$-learnable, then it can be estimated with an error $O(\eta(m))$ with at least probability $1/2$ given a training set with a polynomial number of samples. 
Furthermore, note that this definition of learnable functions motivates the following formalization of quantum advantage in supervised learning.

\begin{definition}\label{def:task_specific_speed_up_APZ}
There is a quantum speed-up learning a function $f(x)$ if there exists a function $\eta:\mathbb{N}\rightarrow \mathbb{R}^+$ such that $f(x)$ is quantum $\eta$-learnable but it is not classical $\eta$-learnable, i.e., $f\in \mathcal{L}_Q^{\,\eta}\backslash \mathcal{L}_C^{\,\eta}$.   
\end{definition}

Interestingly, there might be functions for which there is no quantum advantage following Definition $\ref{def:task_specific_speed_up_APZ}$, and still, these functions can be efficiently learned by a quantum procedure but not by a classical one \emph{using sequence} $\mathcal{D}$. More formally,
\begin{equation}
    f\in \mathcal{L}_Q^{\,\eta}(\mathcal{D})\backslash \mathcal{L}_C^{\,\eta}(\mathcal{D})\centernot\implies f\in \mathcal{L}_Q^{\,\eta}\backslash \mathcal{L}_C^{\,\eta} 
\end{equation}
where $\mathcal{L}_Q^{\,\eta}(\mathcal{D})$ $(\mathcal{L}_C^{\,\eta}(\mathcal{D}))$ denotes the set of functions that can be efficiently learned by a classical (quantum) procedure using sequence $\mathcal{D}$. A formal definition of sets $\mathcal{L}_Q^{\,\eta}(\mathcal{D})$ and $\mathcal{L}_C^{\,\eta}(\mathcal{D})$ is presented in Appendix \ref{app:Thm_1}. With a slight abuse of terminology, we say that there is a quantum speed-up learning function $f(x)$ \emph{for a sequence} $\mathcal{D}$ if $f\in \mathcal{L}_Q^{\,\eta}(\mathcal{D})\backslash \mathcal{L}_C^{\,\eta}(\mathcal{D})$.

\section{Relation between learning and computational quantum speed-up}\label{sec:realtion_learning_and_computational}

Now, the relation between quantum advantage in both learning and computational tasks is studied. We start by introducing the definition of (efficiently) computable function used in this work.  

\begin{definition}
    A function is classical (quantum) $\eta$-computable, denoted as $f\in\mathcal{P}_C^{\,\eta}$ $(\mathcal{P}_Q^{\,\eta})$, where $\eta:\mathbb{N}\rightarrow \mathbb{R}^+$, iff there exists a classical (quantum) algorithm that, for any input $x\in \mathcal{X}$, outputs an estimate $\hat{f}(x)$ in polynomial time in $b(x)$ such that 
    \begin{equation}
        \mathrm{Prob}\left( |\hat{f}(x)-f(x)|<c\,\eta(b(x))\right)>1-\delta
    \end{equation}
    for some constants $c\in \mathbb{R}^+$ and $\delta\in[0,\frac{1}{2})$.
\end{definition}

Analogously to the learning scenario, a computational speed-up is defined as follows.

\begin{definition}
    There is a quantum speed-up computing a function $f(x)$ if there exists a function $\eta:\mathbb{N}\rightarrow \mathbb{R}^+$ such that $f(x)$ is quantum $\eta$-computable but it is not classical $\eta$-computable, i.e., $f\in \mathcal{P}_Q^{\,\eta}\backslash \mathcal{P}_C^{\,\eta}$.  
\end{definition}

The work in \cite{huang2021power} shows that quantum computational advantage does not entail quantum advantage in learning tasks, i.e., there are functions that can be computed by a quantum algorithm but not by a classical one, and still, they can be learned by a classical procedure. Specifically, Proposition \ref{proposition_neq} rephrases the result in \cite{huang2021power} in the context of this work. 

\begin{proposition}\label{proposition_neq}
Under the conjecture that the complexity class Bounded-error quantum polynomial time \textnormal{(\textbf{BQP})} is different from the class Bounded-error probabilistic polynomial time \textnormal{(\textbf{BPP})}, the subset of quantum 1-learnable functions for which a quantum speed-up exists differs from the analogous computational set, i.e., $\mathcal{L}_Q^{\,1}\backslash \mathcal{L}_C^{\,1}\neq \mathcal{P}_Q^{\,1}\backslash \mathcal{P}_C^{\,1}$.
\end{proposition}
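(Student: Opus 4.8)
The plan is to establish the inequality by exhibiting a single witness function $f$ that lies in $\mathcal{P}_Q^{\,1}\setminus\mathcal{P}_C^{\,1}$ but not in $\mathcal{L}_Q^{\,1}\setminus\mathcal{L}_C^{\,1}$. The cleanest way to certify the second (non-)membership is to show the witness is classically $1$-learnable: if $f\in\mathcal{L}_C^{\,1}$ then automatically $f\notin\mathcal{L}_Q^{\,1}\setminus\mathcal{L}_C^{\,1}$. So I would look for a function that is computationally hard for classical machines (giving a genuine \emph{computational} quantum advantage) yet whose global description can be reconstructed from a polynomial-size labelled sample, so that the power of data in the sense of \cite{huang2021power} removes the classical \emph{learning} hardness.

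\textbf{Construction.} Invoking the conjecture, I would fix a language $L\in\textbf{BQP}\setminus\textbf{BPP}$ and let $s\in\{0,1\}^{\infty}$ be the string whose $i$-th entry $s_i$ equals the indicator $\mathbbm{1}_L(i)$, i.e. $s_i=1$ iff the binary encoding of $i$ lies in $L$. On inputs $x\in\{0,1\}^{*}$ define the parity
\[
f(x)=\bigoplus_{i\,:\,x_i=1}s_i=\langle s,x\rangle \bmod 2 .
\]
Since $\langle s,x\rangle$ only involves coordinates $i\le b(x)$, the restriction of $f$ to strings of length at most $m$ depends solely on the prefix $s_1,\dots,s_m$, so $f$ is a single well-defined function on the whole domain.

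\textbf{Verification.} I would then check three properties. (i) \emph{Quantum computability:} to evaluate $f(x)$ a quantum machine computes each relevant bit $s_i=\mathbbm{1}_L(i)$ with the \textbf{BQP} decider for $L$ and XORs the (at most $b(x)$) outcomes; amplifying each bit to error $o(1/b(x))$ and taking a union bound keeps the total error bounded, so $f\in\mathcal{P}_Q^{\,1}$. (ii) \emph{Classical hardness:} feeding the basis string $e_i$ (a single $1$ in position $i$) gives $f(e_i)=s_i=\mathbbm{1}_L(i)$, so any classical poly-time bounded-error evaluator of $f$ would place $L$ in \textbf{BPP}; hence $f\notin\mathcal{P}_C^{\,1}$, and $f$ is a genuine element of $\mathcal{P}_Q^{\,1}\setminus\mathcal{P}_C^{\,1}$. (iii) \emph{Classical learnability:} choose $D_m$ uniform on $\{0,1\}^{\le m}$ (which has the required full support) and draw $p(m)=\Theta(m)$ labelled pairs $(x_i,\langle s,x_i\rangle)$; with probability $1-2^{-\Omega(m)}$ these span $\mathbb{F}_2^{m}$, so Gaussian elimination recovers $s_1,\dots,s_m$ exactly in $\mathrm{poly}(m)$ time and the procedure outputs $\langle s,x\rangle=f(x)$ for every query with $b(x)\le m$. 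This gives zero error simultaneously for all $x\in\mathrm{supp}(D_m)$ whenever the sample spans, so $f\in\mathcal{L}_C^{\,1}$ and therefore $f\notin\mathcal{L}_Q^{\,1}\setminus\mathcal{L}_C^{\,1}$, completing the separation.

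\textbf{Main obstacle.} The delicate point is that learnability in Definition \ref{def:Learning_classical_quantum} is required \emph{pointwise} over the entire support while the probability is taken only over the random training set; a function merely predictable on typical samples would not qualify, because the classically hard instances (such as the $e_i$) carry vanishing mass and almost never appear in $\mathcal{T}_n$. This is exactly why I build $f$ around a \emph{global}, sample-recoverable secret $s$: the linear structure lets a polynomial sample determine $f$ on all of $\{0,1\}^{\le m}$ at once, so correct prediction on the rare hard instances is inherited from the recovered prefix rather than from having observed them. The residual work is routine: confirming that $\Theta(m)$ draws from the uniform distribution on $\{0,1\}^{\le m}$ span $\mathbb{F}_2^{m}$ with overwhelming probability (the length-$m$ strings alone, occurring with constant probability per draw, are uniform in $\mathbb{F}_2^{m}$) and that the quantum bit-wise evaluation can be error-amplified within polynomial time.
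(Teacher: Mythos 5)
Your high-level strategy coincides with the paper's: construct a witness in $\mathcal{P}_Q^{\,1}\backslash\mathcal{P}_C^{\,1}$ that is classically $1$-learnable, hence outside $\mathcal{L}_Q^{\,1}\backslash\mathcal{L}_C^{\,1}$. The quantum-computability and learnability steps (i) and (iii) are fine, but the hardness step (ii) fails, for two separate reasons. The first is the boundedness problem: Definition 3 only asks for $|\hat f(x)-f(x)|<c\,\eta(b(x))$ for \emph{some} constant $c$, so since your $f$ is $\{0,1\}$-valued, the constant estimator $\hat f\equiv 0$ together with $c=2$ already certifies $f\in\mathcal{P}_C^{\,1}$, and your function is not a witness regardless of any conjecture. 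This is exactly why the paper's witness is the circuit expectation value \emph{multiplied by} $\ell(x)$ (and by $\ell(x)^{u+1}$ for general polynomial trends): the growing range turns a constant additive error into a vanishing relative one. This particular defect is patchable, e.g.\ by using $b(x)\cdot\bigl(\langle s,x\rangle\bmod 2\bigr)$.

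The second defect is fatal to the whole design. To extract $s_i=\mathbbm{1}_L(\mathrm{bin}(i))$ you must query $e_i$, whose input size is $b(e_i)=i$, while the corresponding instance $\mathrm{bin}(i)$ of $L$ has length $n\approx\log_2 i$. A classical evaluator of $f$ running in time $\mathrm{poly}(b(e_i))=\mathrm{poly}(i)=2^{O(n)}$ therefore yields only an \emph{exponential-time} classical algorithm for $L$, which contradicts nothing about $L\notin\textbf{BPP}$. Worse, the construction collapses outright for natural choices of $L$: a factoring-type decision language is (conjecturally) in $\textbf{BQP}\backslash\textbf{BPP}$ yet decidable by trial division in time $2^{O(n)}=\mathrm{poly}(i)$, so for such $L$ every bit $s_i$ with $i\le b(x)$ is classically computable in time $\mathrm{poly}(b(x))$, and your $f$ (scaled or not) lies in $\mathcal{P}_C^{\,1}$. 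The obstruction is structural rather than cosmetic: classical learnability forces the secret to contain only $\mathrm{poly}(m)$ bits on inputs of size $m$, so any hardness must reside in a unary/sparse language, and that requires an exponential-scale separation (on the order of $\textbf{BQEXP}\neq\textbf{BPEXP}$), strictly stronger than $\textbf{BQP}\neq\textbf{BPP}$. The paper avoids this trap by building on Proposition 1 of \cite{huang2021power}: its hard quantity $\bra{x}U^\dagger O U\ket{x}$ has an instance description polynomially tied to the scale at which classical simulation is hard, while classical learnability comes from the quadratic-form structure of the function in $x$ (recoverable by least squares from $O(\ell^3)$ samples) rather than from a recoverable bit-string secret.
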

\begin{proof}
    See Appendix \ref{app:separation}.
\end{proof}
Therefore, an equivalence between $\mathcal{L}_Q^{\,\eta}\backslash \mathcal{L}_C^{\,\eta}$ and $\mathcal{P}_Q^{\,\eta}\backslash \mathcal{P}_C^{\,\eta}$ can not be guaranteed for an arbitrary error trend $\eta:\mathbb{N}\rightarrow \mathbb{R}^+$. This observation motivates the question of whether there is any relationship between learning and computational advantage. Indeed, a relation exists for the functions for which a training set can be efficiently generated. This concept is formally introduced next.   
\begin{definition}
    A function $f(x)$ is classical (quantum) generable for a sequence $\mathcal{D}$, denoted as $f\in \mathcal{G}_C(\mathcal{D})$ $(\mathcal{G}_Q(\mathcal{D}))$, iff there exists a classical (quantum) algorithm that for any input $m$ generates a training set $\mathcal{T}_n=\{(x_i, f(x_i))\}_{i=1}^{n}$ in polynomial time in m with high probability, where $x_i\sim D_m$ and $n$ is a polynomial in $m$.
\end{definition}

As the sets $\mathcal{G}_C(\mathcal{D})$ and $\mathcal{G}_Q(\mathcal{D})$ are defined for a sequence of distributions $\mathcal{D}$, we focus on the corresponding sets $\mathcal{L}_Q^{\,\eta}(\mathcal{D})$ and $\mathcal{L}_C^{\,\eta}(\mathcal{D})$. We are now ready to present the main result of this work that establishes a relation between learning and computational advantage.
    
\begin{theorem}\label{thm:conditions_speed_up}
    A quantum speed-up computing function $f(x)$ is equivalent to a quantum speed-up learning $f(x)$ for a sequence of distributions $\mathcal{D}$ if the function is classical generable for sequence $\mathcal{D}$, i.e.,  
    \begin{equation}\label{eq_conditions_speed_up}
        \left(\mathcal{L}_Q^{\,\eta}(\mathcal{D})\backslash \mathcal{L}_C^{\,\eta}(\mathcal{D})\right)\cap \mathcal{G}_C(\mathcal{D}) = \left(\mathcal{P}_Q^{\,\eta}\backslash \mathcal{P}_C^{\,\eta}\right)\cap \mathcal{G}_C(\mathcal{D})
    \end{equation}
\end{theorem}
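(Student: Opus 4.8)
The plan is to reduce the stated set identity to two elementary inclusions between the computability and learnability classes and then conclude by a short set-theoretic manipulation. The two building blocks I would establish are, first, that efficient computability implies efficient learnability by simply discarding the training set,
\[
\mathcal{P}_C^{\,\eta}\subseteq\mathcal{L}_C^{\,\eta}(\mathcal{D}),\qquad \mathcal{P}_Q^{\,\eta}\subseteq\mathcal{L}_Q^{\,\eta}(\mathcal{D}),
\]
and, second, that on classically generable functions the converse holds,
\[
\mathcal{L}_C^{\,\eta}(\mathcal{D})\cap\mathcal{G}_C(\mathcal{D})\subseteq\mathcal{P}_C^{\,\eta},\qquad \mathcal{L}_Q^{\,\eta}(\mathcal{D})\cap\mathcal{G}_C(\mathcal{D})\subseteq\mathcal{P}_Q^{\,\eta}.
\]
Combining the two gives $\mathcal{P}_C^{\,\eta}\cap\mathcal{G}_C(\mathcal{D})=\mathcal{L}_C^{\,\eta}(\mathcal{D})\cap\mathcal{G}_C(\mathcal{D})$ and its quantum counterpart, after which the theorem follows from the elementary identity $(A\setminus B)\cap C=(A\cap C)\setminus(B\cap C)$ applied (twice) with $A=\mathcal{L}_Q^{\,\eta}(\mathcal{D})$, $B=\mathcal{L}_C^{\,\eta}(\mathcal{D})$ and $C=\mathcal{G}_C(\mathcal{D})$.

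For the first inclusion, given $f\in\mathcal{P}_C^{\,\eta}$ I would define a classical learning procedure that ignores its training-set argument and runs the efficient classical algorithm computing $f$ directly on $x$. For an input $x\in\mathrm{supp}(D_m)$ this runs in time polynomial in $b(x)\le m$, hence polynomial in $m$, and returns an estimate accurate to within $c\,\eta(b(x))$ with probability exceeding $1-\delta$. Provided the error trend $\eta$ is non-decreasing, $\eta(b(x))\le\eta(m)$, so the required accuracy $c\,\eta(m)$ holds uniformly over $\mathrm{supp}(D_m)$; this is the only place the behaviour of $\eta$ across scales enters, and the quantum inclusion is verbatim. Neither inclusion uses generability.

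The substance of the theorem lies in the second pair. Given $f\in\mathcal{L}_C^{\,\eta}(\mathcal{D})\cap\mathcal{G}_C(\mathcal{D})$ I would construct a classical algorithm that computes $f$ on input $x$ as follows: set $m=b(x)$, so that $x\in\mathrm{supp}(D_m)$; invoke the classical generator for level $m$ to produce $\mathcal{T}_n=\{(x_i,f(x_i))\}_{i=1}^n$ with $x_i\sim D_m$ and $n$ polynomial in $m$; and feed $(x,\mathcal{T}_n)$ to the classical learning procedure, outputting its estimate $\hat f(x)$. Because the generated samples follow exactly the distribution $D_m$ assumed by the learner, the learner's accuracy guarantee $c\,\eta(m)=c\,\eta(b(x))$ applies, matching the computability requirement, and the whole pipeline runs in time polynomial in $m=b(x)$. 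The quantum version is identical once one notes that a quantum device can also execute the classical generator; this is exactly why classical generability $\mathcal{G}_C(\mathcal{D})$---not quantum generability---suffices and appears on both sides of the identity.

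The step I expect to be the main obstacle is the probabilistic bookkeeping in this second pair, since the composed routine fails whenever either the generator or the learner fails, and it also carries the sampling randomness of $\mathcal{T}_n$. The generator succeeds only ``with high probability'' and the learner only with probability exceeding $1-\delta$ for some $\delta<\tfrac12$, so a single run succeeds with probability at least $(1-\epsilon)(1-\delta)$, whose complement must be driven strictly below $\tfrac12$ to meet the computability definition. I would handle this by first ensuring the generator's failure probability $\epsilon$ is a small enough constant (boosting it by repetition if necessary) and, if needed, by running the generate-then-learn pipeline a number of times polynomial in $m$ with independent randomness and returning the median estimate; a standard concentration argument then keeps the output within the tolerance $c\,\eta(b(x))$ while reducing the failure probability below any prescribed constant, all in polynomial time. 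With the two pairs of inclusions in hand, the set-algebra step yields \eqref{eq_conditions_speed_up} immediately.
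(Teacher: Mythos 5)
Your proposal is correct and follows essentially the same route as the paper: its Lemma 1 establishes exactly your two pairs of inclusions (the trivial one $\mathcal{P}^{\,\eta}\subseteq\mathcal{L}^{\,\eta}(\mathcal{D})$ by discarding the training set, and the generate-then-learn composition $\mathcal{L}^{\,\eta}(\mathcal{D})\cap\mathcal{G}_C(\mathcal{D})\subseteq\mathcal{P}^{\,\eta}$, with the quantum case handled by letting the quantum machine run the classical generator), and then concludes by set algebra. Your finish via the two intersection identities and one application of $(A\setminus B)\cap C=(A\cap C)\setminus(B\cap C)$ is marginally tidier than the paper's two inclusion chains, and your attention to the monotonicity of $\eta$ and to amplifying the success probability is more careful than the paper's, but the substance is identical.
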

\begin{proof}
    See Appendix \ref{app:Thm_1}.
\end{proof}

Under the conjecture that the prime factorization problem is classically intractable, the set $\left(\mathcal{P}_Q^{\,\eta}\backslash \mathcal{P}_C^{\,\eta}\right)\cap \mathcal{G}_C(\mathcal{D})$, and therefore, also $\left(\mathcal{L}_Q^{\,\eta}(\mathcal{D})\backslash \mathcal{L}_C^{\,\eta}(\mathcal{D})\right)\cap \mathcal{G}_C(\mathcal{D})$, differs from the empty set for $\eta(m)=m^u$, and for at least a sequence $\mathcal{D}$ satisfying
        \begin{equation}\label{eq:prop_suff_inf_23}
            \mathrm{Prob}_{X\sim D_m} (x)\geq \frac{\mathrm{Prob}_{X\sim U_m} (x)}{q(m)}
        \end{equation}
 for all $x\in \mathrm{supp}(D_m)$, where $q(m)$ is a polynomial, and $U_m$ denotes the uniform distribution over $\{x\in \mathcal{X}: b(x)\leq m\}$. Specifically, we assume the stronger conjecture that no classical algorithm can  efficiently factorize integers of the form $x=p_1^{r_1}\cdot p_2^{r_2}$. Note that if the latter were false, then the RSA protocol \cite{rivest1978method} could be broken with a classical computer.

 \begin{theorem}\label{thm:examples}
  Let $\mathcal{X}=\{x\in \mathbb{N}:\omega(x)\leq K\}$ be the domain of functions $\{f_l(x)\}_{l=1}^3$, where $\omega(x)$ is the number of distinct prime factors of $x$, which are denoted by $\{p_i\}_{i=1}^{\omega(x)}$, and $K$ is a constant greater than or equal to 2. Under the conjecture that the prime factorization problem is classically intractable for integers with two distinct prime factors:
  
  \begin{enumerate}[(i)]
  \item Function $f_1(x)= \sum_{i=1}^{\omega(x)} p_i$ belongs to $\mathcal{L}_Q^{\,\eta}(\mathcal{D})\backslash\mathcal{L}_C^{\,\eta}(\mathcal{D})$ for all $\eta(m)$ of the form $m^u$, where $u\in\mathbb{R}^+$, as long as $f_1\in \mathcal{G}_C(\mathcal{D})$.
  
  \item At least one of the following functions,
      \begin{equation}
    \begin{matrix} f_2(x)= \prod_{i=1}^{\omega(x)} p_i &       f_3(x)=\left[\sqrt{p_1 \sqrt{p_2\cdots \sqrt{p_{\omega(x)}}}}\right]^4 
     \\
    \end{matrix}
    \end{equation}
    belongs to $\mathcal{L}_Q^{\,\eta}(\mathcal{D})\backslash\mathcal{L}_C^{\,\eta}(\mathcal{D})$ for all $\eta(m)$ of the form $m^u$, as long as $f_2, f_3\in \mathcal{G}_C(\mathcal{D})$.
    
   \end{enumerate}
   Moreover, there exists at least a sequence $\mathcal{D}$ that satisfies \eqref{eq:prop_suff_inf_23} for which $f_l(x)\in \mathcal{G}_C(\mathcal{D})$ for all $l\in\{1,2,3\}$.
\end{theorem}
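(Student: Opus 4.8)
The plan is to reduce the three learning statements to computational ones through Theorem~\ref{thm:conditions_speed_up}, and then to establish the computational separations by reductions to integer factorization. Since each $f_l$ is \emph{assumed} to lie in $\mathcal{G}_C(\mathcal{D})$, Theorem~\ref{thm:conditions_speed_up} gives
\begin{equation*}
\left(\mathcal{P}_Q^{\,\eta}\backslash \mathcal{P}_C^{\,\eta}\right)\cap \mathcal{G}_C(\mathcal{D}) = \left(\mathcal{L}_Q^{\,\eta}(\mathcal{D})\backslash \mathcal{L}_C^{\,\eta}(\mathcal{D})\right)\cap \mathcal{G}_C(\mathcal{D}),
\end{equation*}
so it suffices to prove $f_1\in\mathcal{P}_Q^{\,\eta}\backslash\mathcal{P}_C^{\,\eta}$ for (i), and that at least one of $f_2,f_3$ lies in $\mathcal{P}_Q^{\,\eta}\backslash\mathcal{P}_C^{\,\eta}$ for (ii), for every $\eta(m)=m^u$. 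Membership in $\mathcal{P}_Q^{\,\eta}$ is immediate and uniform in $\eta$: a quantum machine runs Shor's algorithm to recover the full factorization of $x$ in time polynomial in $b(x)$ and then evaluates the sum, product or nested radical \emph{exactly}, so the error is $0<c\,\eta(m)$. The whole content is therefore the classical lower bounds $f\notin\mathcal{P}_C^{\,\eta}$.

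For (i) I would argue the contrapositive: a classical $\eta$-computation of $f_1$ yields a classical factoring algorithm for semiprimes, which belong to $\mathcal{X}$ since $K\ge 2$. Given $x=p_1p_2$, an estimate $\hat f_1$ satisfies $|\hat f_1-(p_1+p_2)|<c\,m^u$ with probability $>1-\delta$, so the integer $p_1+p_2$ lies in the window $[\hat f_1-cm^u,\hat f_1+cm^u]$ of width $O(m^u)=\mathrm{poly}(m)$. I would test each candidate value $s'$ in this window by checking whether $s'^2-4x$ is a perfect square and whether the resulting roots divide $x$; the true $s=p_1+p_2$ passes and exposes $p_1,p_2$. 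This runs in polynomial time and succeeds with probability $>1/2$, contradicting the conjecture. For (ii) I would use that, for $\omega(x)=2$, one has $f_2(x)=p_1p_2$ while $f_3(x)$ is the fourth power of the nearest integer to $p_1^{1/2}p_2^{1/4}$, so that $f_3(x)/f_2(x)$ approximates $p_1$. If \emph{both} $f_2$ and $f_3$ were in $\mathcal{P}_C^{\,\eta}$, then on inputs $x=p_1^{r_1}p_2^{r_2}$ the two estimates, after propagating the $O(m^u)$ additive errors, would localize $p_1$ within a window of polynomial width; a trial-division sweep over that window recovers $p_1$ exactly, and dividing it out of $x$ recovers $p_2$ and the exponents, again contradicting the conjecture. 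The recurring technical point in both reductions is that the \emph{additive} error budget $c\,m^u$ translates into a \emph{polynomial-width} search window, which keeps the reduction efficient for every fixed $u$.

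It remains to exhibit a single sequence $\mathcal{D}$ that satisfies \eqref{eq:prop_suff_inf_23} and makes all three functions classically generable. The decisive observation is that, although factoring a \emph{given} $x$ is hard, one can classically \emph{sample} an integer together with its factorization: I would invoke an algorithm generating uniformly random factored integers in $[1,2^m]$ (Bach's algorithm) and reject the outputs with $\omega(x)>K$. Because the density of integers with at most $K$ distinct prime factors is at least $1/\mathrm{poly}(m)$ by the prime number theorem, acceptance occurs after polynomially many trials in expectation, and the accepted $x$ is distributed exactly as $U_m$ on $\{x\in\mathcal{X}:b(x)\le m\}$. Thus $D_m=U_m$ trivially meets \eqref{eq:prop_suff_inf_23} with $q(m)=1$ and has the required full support. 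Since each sample arrives with its prime factorization, the generator computes $f_1$, $f_2$ and $f_3$ exactly on every drawn $x$, producing a valid training set $\mathcal{T}_n$ with $n=\mathrm{poly}(m)$ in polynomial time with high probability; hence $f_l\in\mathcal{G}_C(\mathcal{D})$ for all $l$.

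I expect the generation step to be the main obstacle. The computational separations are comparatively routine factoring reductions once the polynomial-window observation is made, whereas establishing \eqref{eq:prop_suff_inf_23} jointly with generability requires a near-uniform distribution over $\{x\in\mathcal{X}:b(x)\le m\}$ whose samples already carry their factorizations. A naive sampler that draws the primes independently and uniformly fails the domination condition by an exponential factor, so the argument genuinely relies on the ability to produce random \emph{factored} integers; this is the crux that ties the construction together and explains why generability, rather than hardness, is the binding hypothesis.
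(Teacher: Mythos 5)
Your overall skeleton is the same as the paper's: transfer everything to the computational side via Theorem \ref{thm:conditions_speed_up}, get membership in $\mathcal{P}_Q^{\,\eta}$ from Shor's algorithm, prove the classical lower bounds by reduction to factoring, and separately exhibit a classical generator for a sequence satisfying \eqref{eq:prop_suff_inf_23}. However, your reduction for part (i) has a genuine gap. The conjecture you must contradict, both in the theorem statement and in the paper's usage, is that no efficient classical algorithm factors integers of the form $x=p_1^{r_1}\cdot p_2^{r_2}$ for \emph{arbitrary} multiplicities. Your quadratic-formula test (check whether $s'^2-4x$ is a perfect square) works only when $x=p_1p_2$ is a semiprime: for $(r_1,r_2)\neq(1,1)$ the roots of $t^2-s't+x$ are not the prime factors, so the algorithm you build factors only the semiprime subclass. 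That does not refute the class-level conjecture: hardness of the full class is the \emph{weaker} assumption (implied by, but not implying, hardness of semiprimes), so an algorithm for semiprimes alone leaves the stated hypothesis intact. As written, you have proved (i) only under the strictly stronger hypothesis that semiprimes are classically hard. The paper closes exactly this case by enumerating exponent pairs $r_1',r_2'\leq \log x$ and, for each candidate sum $\tilde f$, locating the at most two real roots of $u^{r_1'/\bar r}(\tilde f-u)^{r_2'/\bar r}-x^{1/\bar r}$ in $(0,\tilde f)$ by binary search (Algorithm \ref{alg_f1}); some device of this kind for general $(r_1,r_2)$ is indispensable.

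For part (ii) your route differs from the paper's, and is attractive, but it hinges on the definition of $f_3$. In the paper the brackets are grouping, not rounding: for $\omega(x)=2$ one has $f_3(x)=\left(p_1^{1/2}p_2^{1/4}\right)^4=p_1^2p_2$ exactly, which is what makes $f_3(x)\in\mathbb{N}$ on that subclass and what the paper's identity $x^2/f_3^{r_1}=p_2^{2r_2-r_1}$ relies on. Under your ``nearest integer'' reading, $f_3/f_2$ deviates from $p_1$ by roughly $4\theta\, p_1^{1/2}/p_2^{1/4}$, which can be exponentially large, and the localization of $p_1$ collapses. With the correct reading, your ratio argument is sound and handles all multiplicities at once, which is arguably cleaner than the paper's pair of divisibility arguments (repeated division of $x$ by $f_2$ covers $r_1\neq r_2$; the $f_3$ identity covers $r_1\neq 2r_2$); but it still needs a case split you omit: when $f_2=p_1p_2$ is polynomially bounded, the error can dominate the denominator $\hat f_2$ and the ratio is meaningless, although in that regime $p_1\leq\sqrt{f_2}$ is polynomially bounded and direct trial division succeeds.

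Your generation step is correct and genuinely simpler than the paper's. Invoking Bach's (or Kalai's) algorithm for uniformly random factored integers in $[1,2^m]$, and rejecting samples with $\omega(x)>K$, yields exactly the uniform distribution on the domain with factorizations attached; acceptance probability is $\Omega(1/m)$ since primes alone have that density, and \eqref{eq:prop_suff_inf_23} holds with $q(m)=1$. The paper instead builds a bespoke three-step sampler (Landau asymptotics for the number of prime factors, a simplex-volume count for multiplicities, explicit prime-counting bounds, constant rejection probability) that achieves uniformity only up to constant factors. Provided one may cite the random-factored-integers result as known, your version buys exact uniformity and spares several pages of estimates; it is the one place where your proposal improves on the paper rather than falling short of it.
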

\begin{proof}
See Appendix \ref{sec:proof_details_prime}.
\end{proof}

\textit{Sketch of the proof}: The first part of the proof uses a reduction argument. In particular, we show that if $f_1(x)$ or both $f_2(x)$ and $f_3(x)$ can be computed efficiently with a polynomial error, i.e., $f_l\in \mathcal{P}_C^{\,\eta}$ for $\eta(m)=m^u$, then there exists a polynomial algorithm that factorizes integers of the form $x=p_1^{r_1}\cdot p_2^{r_2}$ with high probability. Hence, assuming the classical intractability of the prime factorization problem, we conclude that $f_1\notin \mathcal{P}_C^{\,\eta}$ and at least one between $f_2(x)$ and $f_3(x)$ does not belong to $\mathcal{P}_C^{\,\eta}$ for $\eta(m)=m^u$. 

On the other hand, note that the prime factors of $x$, $\{p_i\}_{i=1}^{\omega(x)}$, can be computed in polynomial time with high probability using Shor's algorithm \cite{shor1999polynomial}. Hence, as functions $\{f_l(x)\}_{l=1}^3$ can be computed efficiently once the prime factors are given, then there exists an efficient quantum algorithm that computes these functions, i.e., $f_l\in \mathcal{P}_Q^{\,\eta}$ for $\eta(m)=m^u$. Therefore, using the fact that $f_1\in \mathcal{P}_Q^{\,\eta}\backslash \mathcal{P}_C^{\,\eta}$, and Theorem \ref{thm:conditions_speed_up}, we conclude that $f_1\in \mathcal{L}_Q^{\,\eta}(\mathcal{D})\backslash\mathcal{L}_C^{\,\eta}(\mathcal{D})$, if $f_1\in \mathcal{G}_C(\mathcal{D})$. Analogously, at least one between $f_2(x)$ and $f_3(x)$ belongs to $\mathcal{L}_Q^{\,\eta}(\mathcal{D})\backslash \mathcal{L}_C^{\,\eta}(\mathcal{D})$.

The last part of the proof shows that $f_l\in \mathcal{G}(\mathcal{D})$ for a sequence that satisfies \eqref{eq:prop_suff_inf}. That is, there exists a classical algorithm that generates a training set $\mathcal{T}_n$ for sequence $\mathcal{D}$. The adopted approach generates random samples $(x,\{p_i\}_{i=1}^{\omega(x)})$ and then computes $f_l(x)$ using the prime factors.

 \begin{figure}[H]
	\centering
	\includegraphics[width=7.5cm]{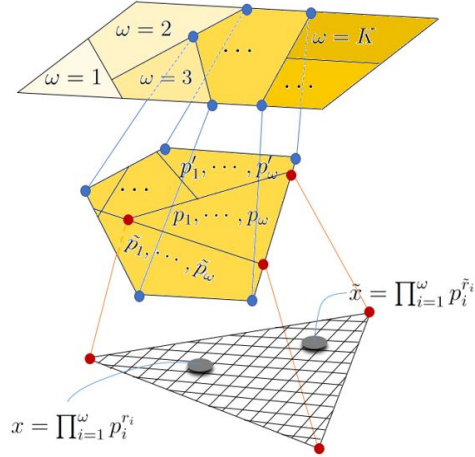}
	\caption{Diagram of the procedure used to generate the training set.}
	\label{fig:generate_samples}
\end{figure}

Figure \ref{fig:generate_samples} represents the three-step procedure followed to generate the random samples $(x,\{p_i\}_{i=1}^{\omega(x)})$ according to a nearly uniform distribution. First, the number of distinct prime numbers $\omega$ is randomly selected, with a probability proportional to the number of integers with $\omega$ distinct prime factors. Once $\omega$ is given, a set of prime numbers $\{p_1,\cdots,p_\omega\}$ is generated randomly. The probability of observing a set $\{p_1,\cdots, p_\omega\}$ is approximately proportional to the number of integers with those prime factors. Finally, the multiplicity $r_i$ of each prime is selected, where each vector $[r_1,\cdots,r_\omega]^T$ that satisfies $\prod_{i=1}^\omega p_i^{r_i}\leq 2^m$ is equally probable.

 \hspace*{15.5cm}  $\square$

On the other hand, Theorem \ref{thm:conditions_speed_up} can be used to characterize the relation between learning and computational advantage when the training set can be generated efficiently by a quantum algorithm, as shown next.

\begin{proposition}\label{prop:advantage_GQ}
    For any error trend $\eta(m)$ and sequence of distributions $\mathcal{D}$, there exists a set $\tilde{\mathcal{G}}(\mathcal{D})$ that satisfies
    \begin{equation}\label{eq_prop_2_1}
        \mathcal{G}_C(\mathcal{D})\subseteq \tilde{\mathcal{G}}(\mathcal{D})\subseteq \mathcal{G}_Q(\mathcal{D})
    \end{equation}
    and, 
    \begin{equation}\label{eq_conditions_speed_up_Q}
        \left(\mathcal{L}_Q^{\,\eta}(\mathcal{D})\backslash \mathcal{L}_C^{\,\eta}(\mathcal{D})\right)\cap \mathcal{G}_Q(\mathcal{D}) = \left(\mathcal{P}_Q^{\,\eta}\backslash \mathcal{P}_C^{\,\eta}\right)\cap \tilde{\mathcal{G}}(\mathcal{D})
    \end{equation}    
\end{proposition}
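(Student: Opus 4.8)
The plan relies on two elementary observations that already underlie Theorem \ref{thm:conditions_speed_up}. The \emph{computation-to-learning} inclusions $\mathcal{P}_C^{\,\eta}\subseteq \mathcal{L}_C^{\,\eta}(\mathcal{D})$ and $\mathcal{P}_Q^{\,\eta}\subseteq \mathcal{L}_Q^{\,\eta}(\mathcal{D})$ hold unconditionally, since a learning procedure can simply discard its training set and evaluate the computational algorithm on $x$. The converse \emph{learning-to-computation} transfer holds only under generability: composing a generator with a learner shows $\mathcal{L}_Q^{\,\eta}(\mathcal{D})\cap \mathcal{G}_Q(\mathcal{D})\subseteq \mathcal{P}_Q^{\,\eta}$ and $\mathcal{L}_C^{\,\eta}(\mathcal{D})\cap \mathcal{G}_C(\mathcal{D})\subseteq \mathcal{P}_C^{\,\eta}$. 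I will also use that a quantum device simulates any classical generator, so that $\mathcal{G}_C(\mathcal{D})\subseteq \mathcal{G}_Q(\mathcal{D})$.

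With these in hand I would exhibit $\tilde{\mathcal{G}}(\mathcal{D})$ explicitly by setting
\begin{equation*}
\tilde{\mathcal{G}}(\mathcal{D}):=\mathcal{G}_C(\mathcal{D})\cup\left[\left(\mathcal{L}_Q^{\,\eta}(\mathcal{D})\backslash \mathcal{L}_C^{\,\eta}(\mathcal{D})\right)\cap \mathcal{G}_Q(\mathcal{D})\right].
\end{equation*}
The inclusion chain \eqref{eq_prop_2_1} is then immediate: $\mathcal{G}_C(\mathcal{D})\subseteq \tilde{\mathcal{G}}(\mathcal{D})$ because $\mathcal{G}_C(\mathcal{D})$ is the first term of the union, and $\tilde{\mathcal{G}}(\mathcal{D})\subseteq \mathcal{G}_Q(\mathcal{D})$ because both terms lie in $\mathcal{G}_Q(\mathcal{D})$, using $\mathcal{G}_C(\mathcal{D})\subseteq \mathcal{G}_Q(\mathcal{D})$ for the first.

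To verify \eqref{eq_conditions_speed_up_Q} I would argue the two inclusions separately. For ``$\subseteq$'', take $f\in \left(\mathcal{L}_Q^{\,\eta}(\mathcal{D})\backslash \mathcal{L}_C^{\,\eta}(\mathcal{D})\right)\cap \mathcal{G}_Q(\mathcal{D})$: the quantum learning-to-computation transfer gives $f\in \mathcal{P}_Q^{\,\eta}$, while the classical computation-to-learning inclusion, contraposed against $f\notin \mathcal{L}_C^{\,\eta}(\mathcal{D})$, gives $f\notin \mathcal{P}_C^{\,\eta}$; moreover $f$ is literally the second term of $\tilde{\mathcal{G}}(\mathcal{D})$, so $f\in \left(\mathcal{P}_Q^{\,\eta}\backslash \mathcal{P}_C^{\,\eta}\right)\cap \tilde{\mathcal{G}}(\mathcal{D})$. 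For ``$\supseteq$'', take $f\in \left(\mathcal{P}_Q^{\,\eta}\backslash \mathcal{P}_C^{\,\eta}\right)\cap \tilde{\mathcal{G}}(\mathcal{D})$; then $f\in \mathcal{L}_Q^{\,\eta}(\mathcal{D})$ by the quantum computation-to-learning inclusion, and I split on which part of the union contains $f$. If $f\in \left(\mathcal{L}_Q^{\,\eta}(\mathcal{D})\backslash \mathcal{L}_C^{\,\eta}(\mathcal{D})\right)\cap \mathcal{G}_Q(\mathcal{D})$ the membership is direct; if instead $f\in \mathcal{G}_C(\mathcal{D})$, then $f\in \left(\mathcal{P}_Q^{\,\eta}\backslash \mathcal{P}_C^{\,\eta}\right)\cap \mathcal{G}_C(\mathcal{D})$ and Theorem \ref{thm:conditions_speed_up} yields $f\in \left(\mathcal{L}_Q^{\,\eta}(\mathcal{D})\backslash \mathcal{L}_C^{\,\eta}(\mathcal{D})\right)\cap \mathcal{G}_C(\mathcal{D})$, after which $\mathcal{G}_C(\mathcal{D})\subseteq \mathcal{G}_Q(\mathcal{D})$ places $f$ in the target set.

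The only genuine obstacle is the asymmetry of the converse transfer: deducing $f\notin \mathcal{L}_C^{\,\eta}(\mathcal{D})$ from $f\notin \mathcal{P}_C^{\,\eta}$ needs a \emph{classical} generator, which a merely quantum-generable function need not have, so neither $\tilde{\mathcal{G}}(\mathcal{D})=\mathcal{G}_C(\mathcal{D})$ nor $\tilde{\mathcal{G}}(\mathcal{D})=\mathcal{G}_Q(\mathcal{D})$ works in general. The role of $\tilde{\mathcal{G}}(\mathcal{D})$ is precisely to isolate, among the quantum-generable functions, those for which the classical transfer is available — either because they are classically generable (handled by Theorem \ref{thm:conditions_speed_up}) or because they already lie in the learning-separation set. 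I would also remark that $\tilde{\mathcal{G}}(\mathcal{D})$ depends on $\eta$ through the learning classes, which is consistent with the statement, since $\tilde{\mathcal{G}}(\mathcal{D})$ is only required to exist for each fixed $\eta$.
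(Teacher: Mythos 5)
Your proposal is correct and rests on exactly the same machinery as the paper's proof: the inclusions $\mathcal{P}_C^{\,\eta}\subseteq\mathcal{L}_C^{\,\eta}(\mathcal{D})$, $\mathcal{P}_Q^{\,\eta}\subseteq\mathcal{L}_Q^{\,\eta}(\mathcal{D})$, $\mathcal{G}_Q(\mathcal{D})\cap\mathcal{L}_Q^{\,\eta}(\mathcal{D})\subseteq\mathcal{P}_Q^{\,\eta}$, the fact that $\mathcal{G}_C(\mathcal{D})\subseteq\mathcal{G}_Q(\mathcal{D})$, and Theorem \ref{thm:conditions_speed_up}; indeed the two inclusions you verify for \eqref{eq_conditions_speed_up_Q} are precisely the two inclusions to which the paper reduces the proposition. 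The only difference is presentational and slightly to your credit: you exhibit the witness $\tilde{\mathcal{G}}(\mathcal{D})=\mathcal{G}_C(\mathcal{D})\cup\left[\left(\mathcal{L}_Q^{\,\eta}(\mathcal{D})\backslash\mathcal{L}_C^{\,\eta}(\mathcal{D})\right)\cap\mathcal{G}_Q(\mathcal{D})\right]$ explicitly at the outset, whereas the paper's proof leaves the existence of $\tilde{\mathcal{G}}(\mathcal{D})$ implicit in its ``equivalence'' claim and only produces an explicit (functionally equivalent) set in a separate characterization subsection afterwards.
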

\begin{proof}
    See Appendix \ref{app:Thm_1}.
\end{proof}

    Similarly to the result of Theorem \ref{thm:conditions_speed_up}, Proposition \ref{prop:advantage_GQ} shows that learning and computational advantage are related when a quantum algorithm can efficiently generate the training set.

 \section{Learning speed-up in invariant functions}\label{sec:invariant_functions}
 
The previous section presents Theorem \ref{thm:conditions_speed_up}, which relates a learning speed-up for a particular sequence of distributions and a computational speed-up. However, this result does not clarify the relation between sets $\mathcal{L}_Q^{\eta}\backslash  \mathcal{L}_C^{\eta}$ and $\mathcal{P}_Q^{\eta}\backslash  \mathcal{P}_C^{\eta}$. To study this relation, we focus on functions for which the error is approximately invariant to small changes in the training set. We refer to this characteristic as the invariance property. 
    
More precisely, we say that a function $f(x)$ is invariant or satisfies the invariance property if the set $\{x\in\mathcal{X}: b(x)\leq m\}$ can be divided into a polynomial number of subsets with similar cardinality, such that when $x$ and $x'$ belong to the same subset $R_j$, then both pairs $(x,f(x))$ and $(x',f(x'))$ are equally informative about function $f(x)$. A formal definition of the invariance property is shown in Appendix \ref{appendix:suff_informative}. 

An important caveat of the formal definition shown in Appendix \ref{appendix:suff_informative} is its dependence on the type of learning procedure employed, whether classical or quantum. To address this, we introduce the classical invariance property and the quantum invariance property as distinct concepts.

    \begin{proposition}\label{prop:suff_inf}
        If $f(x)$ is classical (quantum) $\eta$-learnable and satisfies the classical (quantum) invariance property, then $f(x)$ is classical (quantum) $\eta$-learnable using any sequence of distributions $\mathcal{D}=\{D_m\}_{m=1}^{\infty}$ such that
        \begin{equation}\label{eq:prop_suff_inf}
            \mathrm{Prob}_{X\sim D_m} (x)\geq \frac{\mathrm{Prob}_{X\sim U_m} (x)}{q(m)}
        \end{equation}
        for all $x\in\{x\in \mathcal{X}:b(x)\leq m\}$, where $q(m)$ is a polynomial, and $U_m$ denotes the uniform distribution over $\{x\in \mathcal{X}: b(x)\leq m\}$.
    \end{proposition}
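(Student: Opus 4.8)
The plan is to reduce learnability under an arbitrary sequence $\mathcal{D}$ obeying \eqref{eq:prop_suff_inf} to a single \emph{coverage} event on the cells supplied by the invariance property, and then to show that polynomially many draws from $D_m$ realize this event with high probability. First I would invoke $\eta$-learnability to fix a classical (quantum) learning procedure $M$, a witnessing sequence $\mathcal{D}'=\{D'_m\}$, and a polynomial $p(m)$ for which $M$ attains error below $c\,\eta(m)$ with confidence $1-\delta$ on every query when fed a $p(m)$-sample training set drawn from $D'_m$. Next I would invoke the (classical or quantum) invariance property to obtain, for each $m$, a partition $\{R_j\}_{j=1}^{t(m)}$ of $\{x:b(x)\le m\}$ into $t(m)=\mathrm{poly}(m)$ cells of comparable cardinality, together with the interchangeability guarantee that two pairs $(x,f(x))$ and $(x',f(x'))$ with $x,x'\in R_j$ are equally informative. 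Combining these facts, the operative claim is that $M$ (with its region-slots filled by \emph{any} representative drawn from the corresponding $R_j$) still meets the error/confidence guarantee on every query whenever the input training set contains at least one sample from every cell $R_j$; I will call this the good-coverage event $\mathcal{C}$.

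The technical core is a coupon-collector estimate showing that $\mathcal{C}$ holds with high probability under $D_m$ from only polynomially many samples. Summing \eqref{eq:prop_suff_inf} over a cell and using that the cells have comparable cardinality, each region carries non-negligible mass,
\[
\mathrm{Prob}_{X\sim D_m}(R_j)\;\ge\;\frac{\mathrm{Prob}_{X\sim U_m}(R_j)}{q(m)}\;\ge\;\frac{\gamma}{q(m)\,t(m)}\;=:\;\rho(m),
\]
where $\gamma>0$ is the comparability constant and $\rho(m)=1/\mathrm{poly}(m)$. Drawing $n$ i.i.d.\ samples from $D_m$, the probability of missing a fixed cell is at most $(1-\rho(m))^{n}\le e^{-\rho(m)\,n}$, so by a union bound the probability of missing any of the $t(m)$ cells is at most $t(m)\,e^{-\rho(m)\,n}$. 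Choosing $n=\big\lceil \rho(m)^{-1}\ln\!\big(t(m)/\delta'\big)\big\rceil=\mathrm{poly}(m)$ drives this below any target $\delta'$, so $\mathcal{C}$ fails with probability at most $\delta'$.

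To assemble the proof I would condition on $\mathcal{C}$: extract one representative pair $(x_j,f(x_j))$ per cell from the $D_m$-training set, feed the re-wired procedure $M$, and apply the interchangeability guarantee to transfer the original error/confidence bound to the present input; a union bound over the procedure's failure $\delta$ and the coverage failure $\delta'$ yields overall confidence $1-(\delta+\delta')$, and since $\delta<\tfrac12$ one picks $\delta'$ small enough that $\delta+\delta'<\tfrac12$ (or first boosts $M$'s confidence by majority voting). Because $n$ and $p(m)$ are polynomial and $M$ runs in polynomial time, the whole procedure is efficient in $m$, establishing $f\in\mathcal{L}_C^{\,\eta}(\mathcal{D})$ (resp.\ $\mathcal{L}_Q^{\,\eta}(\mathcal{D})$); the argument is identical in the two settings, since the coverage step is purely classical and only the appropriate invariance property is invoked.

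The step I expect to be the main obstacle is the operative claim that good coverage alone suffices for $M$ to succeed, i.e.\ decoupling the original guarantee from the specific witnessing distribution $D'_m$ and re-attaching it to the event $\mathcal{C}$. Note that \eqref{eq:prop_suff_inf} forces $\mathrm{supp}(D_m)$ to be the full set $\{x:b(x)\le m\}$, so the per-query guarantee must hold on \emph{every} cell, and coverage plus invariance therefore has to certify $f$ on all of them even if $D'_m$ concentrated elsewhere. This is exactly where the formal invariance property must do its work: one has to verify that substituting a within-cell representative for whatever sample the original procedure would have consumed leaves the output unchanged up to the allowed tolerance (classically, its distribution; quantumly, the state the procedure prepares and hence its measurement statistics), and that the per-cell multiplicities produced by $D_m$-sampling are compatible with what $M$ ingests. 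Making ``equally informative'' operationally precise in the quantum case, so that interchanging samples is indistinguishable to the quantum learner, is the delicate point, and it is precisely for this reason that the classical and quantum invariance properties are separated in the formal definition.
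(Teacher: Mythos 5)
Your overall strategy matches the paper's in outline (invoke the invariance property, then show that polynomially many draws from any $\mathcal{D}$ satisfying \eqref{eq:prop_suff_inf} produce a training set that the invariance property accepts), but your operative claim --- that the good-coverage event $\mathcal{C}$, i.e.\ \emph{one} sample per cell $R_j$, suffices to transfer the original guarantee --- is a genuine gap, and it is not how the paper's formal invariance property (Definition \ref{def:invariance_property}) works. There, the hypothesis under which a no-worse procedure $M'$ is guaranteed to exist is a \emph{count-domination} condition: with probability at least $1-\phi$, the new training set must contain at least as many \emph{distinct} elements in every cell as the old one, $N_{R_j}(\mathcal{T}_n(D_m))\leq N_{R_j}(\mathcal{T}_{n'}(D_m'))$ for all $j$. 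Since the witnessing training set of size $p(m)$ may concentrate up to $p(m)$ distinct samples in a single cell, domination forces the new training set to hold at least $p(m)$ distinct elements in \emph{every} cell; one representative per cell cannot dominate it. Nor is this merely an artifact of the formalization that you could argue around: pairwise ``equal informativeness'' lets you exchange distinct samples one-for-one within a cell, but it does not let you collapse $p(m)$ distinct pairs onto a single repeated pair, so your re-wired $M$ fed one representative per cell inherits no guarantee from the original one.

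Repairing this changes the quantitative core of the argument. Instead of a coupon-collector bound for hitting each cell once, you must show that polynomially many draws from $D_m'$ yield at least $p(m)$ \emph{distinct} elements in each of the $r(m)$ cells with probability $1-\phi$; distinctness is the new difficulty, because repeated draws do not count. The paper handles it by noting that each cell has roughly $2^m/r(m)\gg p(m)$ elements, so that conditioned on having collected fewer than $p(m)$ distinct elements of $R_j$, a fresh draw lands on a \emph{new} element of $R_j$ with probability at least $\frac{1}{z(m)}=\frac{1}{q(m)r(m)}-\frac{p(m)}{2^m q(m)}$, which is inverse-polynomial by \eqref{eq:prop_suff_inf}; it then compares $N_{R_j}(\mathcal{T}_i(D_m'))$ to a Bernoulli random walk, bounds the expected time to reach $p(m)$ by $z(m)(p(m)-1)$, and applies Markov's inequality plus a union bound over the $r(m)$ cells, arriving at $n'=\frac{2}{\frac{1}{2}-\delta}\,r(m)\,z(m)\,(p(m)-1)$ samples. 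Two smaller points: the paper chooses $\phi=\frac{1}{2}\left(\frac{1}{2}-\delta\right)$ so that the degraded confidence $1-\delta-\phi$ stays above $\frac{1}{2}$ with no boosting step (your majority-vote fallback would in any case need to be a median argument for real-valued estimates), and the transfer from ``$M'$ is no worse with probability $1-\phi$'' to the final confidence bound is itself a separate step (Lemma \ref{lemma:increase_prob}), not immediate.
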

\begin{proof}
See Appendix \ref{appendix:suff_informative}.
\end{proof}

Therefore, if the function to be estimated satisfies the classical invariance property, but it is not classically learnable using a sequence $\mathcal{D}$ that fulfills \eqref{eq:prop_suff_inf}, then the function cannot be classically learned independently of the distribution used. Furthermore, if the function is quantum $\eta$-learnable using sequence $\mathcal{D}$, then $f\in\mathcal{L}_Q^{\eta}\backslash \mathcal{L}_C^{\eta}$, i.e., for this type of functions, we can prove a quantum speed-up as in Definition 2  by demonstrating a quantum advantage for a sequence $\mathcal{D}$ that satisfies \eqref{eq:prop_suff_inf}.

In addition to the aforementioned logical deduction, which is captured by equation (10), we leverage Proposition 3 to derive the following results.

\begin{theorem}\label{thm:non_distribution}
    
    Let $\mathcal{I}_C$ and $\mathcal{I}_Q$ denote the set of classical and quantum invariant functions, and $\mathcal{G}_C$ the set of functions that are classical generable for some sequence $\mathcal{D}$ that satisfies \eqref{eq:prop_suff_inf}, then 
    \begin{equation}\label{classical_invariance_result}
        \left(\mathcal{P}_Q^{\eta}\backslash  \mathcal{P}_C^{\eta}\right) \cap \left(\mathcal{G}_C \cap \mathcal{I}_C \right)\subseteq \left(\mathcal{L}_Q^{\eta}\backslash  \mathcal{L}_C^{\eta}\right) \cap \left(\mathcal{G}_C \cap \mathcal{I}_C \right)
    \end{equation}
    \begin{equation}\label{quantum_invariance_result}
        \left(\mathcal{P}_Q^{\eta}\backslash  \mathcal{P}_C^{\eta}\right) \cap \left(\mathcal{G}_C \cap \mathcal{I}_Q \right)\supseteq \left(\mathcal{L}_Q^{\eta}\backslash  \mathcal{L}_C^{\eta}\right) \cap \left(\mathcal{G}_C \cap \mathcal{I}_Q \right)
    \end{equation}
    which implies that 
    \begin{equation}
        \left(\mathcal{P}_Q^{\eta}\backslash  \mathcal{P}_C^{\eta}\right) \cap \left(\mathcal{G}_C \cap \mathcal{I}_C \cap \mathcal{I}_Q\right)=\left(\mathcal{L}_Q^{\eta}\backslash  \mathcal{L}_C^{\eta}\right) \cap \left(\mathcal{G}_C \cap \mathcal{I}_C\cap \mathcal{I}_Q \right)
    \end{equation}
\end{theorem}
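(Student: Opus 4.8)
The plan is to prove the three claims of Theorem \ref{thm:non_distribution} by combining Theorem \ref{thm:conditions_speed_up} with the invariance machinery of Proposition \ref{prop:suff_inf}. The overall strategy is to translate statements about distribution-independent learnability (the sets $\mathcal{L}_Q^{\eta}$ and $\mathcal{L}_C^{\eta}$ without a $\mathcal{D}$ argument) into statements about learnability for a particular well-behaved sequence $\mathcal{D}$ satisfying \eqref{eq:prop_suff_inf}, where Theorem \ref{thm:conditions_speed_up} applies and ties learning advantage to computational advantage. The key observation is that for functions lying in $\mathcal{G}_C$, there is by definition a sequence $\mathcal{D}$ satisfying \eqref{eq:prop_suff_inf} along which the function is classically generable, and this is precisely the hypothesis needed to invoke Theorem \ref{thm:conditions_speed_up} restricted to that $\mathcal{D}$.

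First I would establish the inclusion \eqref{classical_invariance_result}. Take $f$ in the left-hand intersection, so $f \in \mathcal{P}_Q^{\eta}\backslash\mathcal{P}_C^{\eta}$ and $f \in \mathcal{G}_C \cap \mathcal{I}_C$. Since $f \in \mathcal{G}_C$, fix a sequence $\mathcal{D}$ satisfying \eqref{eq:prop_suff_inf} with $f \in \mathcal{G}_C(\mathcal{D})$. Applying Theorem \ref{thm:conditions_speed_up} for this $\mathcal{D}$, the computational separation $f \in \mathcal{P}_Q^{\eta}\backslash\mathcal{P}_C^{\eta}$ transfers to a learning separation $f \in \mathcal{L}_Q^{\eta}(\mathcal{D})\backslash\mathcal{L}_C^{\eta}(\mathcal{D})$. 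It then remains to promote the distribution-specific learning separation to the distribution-independent one, i.e. to $f \in \mathcal{L}_Q^{\eta}\backslash\mathcal{L}_C^{\eta}$. The quantum membership $f \in \mathcal{L}_Q^{\eta}(\mathcal{D}) \subseteq \mathcal{L}_Q^{\eta}$ is immediate since distribution-independent learnability only requires the existence of \emph{some} sequence. For the classical non-membership, I would argue by contraposition using the classical invariance property: if $f$ were in $\mathcal{L}_C^{\eta}$, then $f$ would be classical $\eta$-learnable for some sequence, and by Proposition \ref{prop:suff_inf} (classical case), combined with $f \in \mathcal{I}_C$, it would be classical $\eta$-learnable for every sequence satisfying \eqref{eq:prop_suff_inf}, in particular for our chosen $\mathcal{D}$, contradicting $f \notin \mathcal{L}_C^{\eta}(\mathcal{D})$. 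This yields $f \in \left(\mathcal{L}_Q^{\eta}\backslash\mathcal{L}_C^{\eta}\right)\cap\left(\mathcal{G}_C\cap\mathcal{I}_C\right)$, proving \eqref{classical_invariance_result}.

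The reverse inclusion \eqref{quantum_invariance_result} runs symmetrically but with the roles of the learning and computational worlds interchanged and with $\mathcal{I}_Q$ in place of $\mathcal{I}_C$. Take $f$ in the right-hand intersection, so $f \in \mathcal{L}_Q^{\eta}\backslash\mathcal{L}_C^{\eta}$ and $f \in \mathcal{G}_C \cap \mathcal{I}_Q$. Fix a sequence $\mathcal{D}$ satisfying \eqref{eq:prop_suff_inf} with $f \in \mathcal{G}_C(\mathcal{D})$. Using the quantum invariance property via Proposition \ref{prop:suff_inf} (quantum case), the quantum learnability $f \in \mathcal{L}_Q^{\eta}$ for some sequence promotes to quantum learnability for $\mathcal{D}$, giving $f \in \mathcal{L}_Q^{\eta}(\mathcal{D})$; meanwhile $f \notin \mathcal{L}_C^{\eta}$ trivially gives $f \notin \mathcal{L}_C^{\eta}(\mathcal{D})$. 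Hence $f \in \mathcal{L}_Q^{\eta}(\mathcal{D})\backslash\mathcal{L}_C^{\eta}(\mathcal{D})$, and Theorem \ref{thm:conditions_speed_up} transfers this back to the computational separation $f \in \mathcal{P}_Q^{\eta}\backslash\mathcal{P}_C^{\eta}$, establishing \eqref{quantum_invariance_result}. The final equality then follows by intersecting both sides with $\mathcal{I}_C \cap \mathcal{I}_Q$: \eqref{classical_invariance_result} restricted to $\mathcal{I}_Q$ gives one inclusion and \eqref{quantum_invariance_result} restricted to $\mathcal{I}_C$ gives the other, and the two combine into the stated identity.

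I expect the main obstacle to be the careful handling of the logical direction in which Proposition \ref{prop:suff_inf} is applied, and ensuring its hypotheses are genuinely met. Proposition \ref{prop:suff_inf} presupposes that $f$ is \emph{already} $\eta$-learnable (for some sequence) before concluding learnability for all sequences satisfying \eqref{eq:prop_suff_inf}; I must verify that in each contraposition the learnability premise is exactly the one supplied by membership in $\mathcal{L}_C^{\eta}$ or $\mathcal{L}_Q^{\eta}$, and that the sequence $\mathcal{D}$ furnished by $\mathcal{G}_C$ indeed lies in the class \eqref{eq:prop_suff_inf} to which the proposition's conclusion applies. A subtle point worth flagging is the asymmetry between the classical and quantum invariance properties: because invariance is defined relative to the procedure type (as the excerpt stresses), only $\mathcal{I}_C$ can be used to transport \emph{classical} non-learnability and only $\mathcal{I}_Q$ to transport \emph{quantum} learnability, which is exactly why the two inclusions carry different invariance hypotheses and why the final equality needs both. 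Once this bookkeeping is pinned down, the remaining steps are direct consequences of Theorem \ref{thm:conditions_speed_up}.
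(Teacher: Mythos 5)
Your proposal is correct and follows essentially the same route as the paper's proof: both inclusions are obtained by combining Theorem \ref{thm:conditions_speed_up} (applied to a sequence $\mathcal{D}$ furnished by membership in $\mathcal{G}_C$) with Proposition \ref{prop:suff_inf}, used in contrapositive form with $\mathcal{I}_C$ to transport classical non-learnability for \eqref{classical_invariance_result}, and used directly with $\mathcal{I}_Q$ to transport quantum learnability for \eqref{quantum_invariance_result}, after which the equality follows by intersecting with $\mathcal{I}_C\cap\mathcal{I}_Q$. The only difference is presentational: you argue element-wise over a fixed $f$, whereas the paper phrases the same argument as a chain of set inclusions followed by a union over all admissible sequences $\mathcal{D}$.
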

\begin{proof}
See Appendix \ref{proof_theorem_3}.
\end{proof}

Hence, the sets $\mathcal{P}_Q^{\eta}\backslash  \mathcal{P}_C^{\eta}$ and $\mathcal{L}_Q^{\eta}\backslash  \mathcal{L}_C^{\eta}$ exhibit a similar relation to the one shown in \,\,\, Theorem 1.

Furthermore, from Theorem \ref{thm:examples} and Proposition \ref{prop:suff_inf}, the following result is obtained.

\begin{corollary}
 Under the conjecture that the prime factorization problem is classically intractable for integers of the form $x=p_1^{r_1}\cdot p_2^{r_2}$, and assuming that functions $\{f_l(x)\}_{l=1}^3$ satisfy the classical invariance property, then $f_1(x)$ and at least one between $f_2(x)$ and $f_3(x)$ belong to $\mathcal{L}_Q^{\,\eta}\backslash\mathcal{L}_C^{\,\eta}$ for any function $\eta(m)=m^u$, where $u\in \mathbb{R}^+$.
\end{corollary}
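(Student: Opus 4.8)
The plan is to upgrade the per-sequence separation of Theorem~\ref{thm:examples} into the distribution-independent separation of Definition~\ref{def:task_specific_speed_up_APZ}, using the classical invariance property as the bridge supplied by Proposition~\ref{prop:suff_inf}. Concretely, Theorem~\ref{thm:examples} furnishes a concrete sequence $\mathcal{D}^{\star}$ satisfying \eqref{eq:prop_suff_inf} (recall that the support convention $\mathrm{supp}(D_m)=\{x\in\mathcal{X}:b(x)\le m\}$ makes \eqref{eq:prop_suff_inf_23} and \eqref{eq:prop_suff_inf} coincide), for which $f_l\in\mathcal{G}_C(\mathcal{D}^{\star})$ for every $l$ and, moreover, $f_1\in\mathcal{L}_Q^{\,\eta}(\mathcal{D}^{\star})\backslash\mathcal{L}_C^{\,\eta}(\mathcal{D}^{\star})$ while at least one of $f_2,f_3$ lies in $\mathcal{L}_Q^{\,\eta}(\mathcal{D}^{\star})\backslash\mathcal{L}_C^{\,\eta}(\mathcal{D}^{\star})$, for every $\eta(m)=m^u$. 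The goal is to show that each such function in fact lies in the stronger set $\mathcal{L}_Q^{\,\eta}\backslash\mathcal{L}_C^{\,\eta}$.

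For the quantum membership I would use only the trivial inclusion $\mathcal{L}_Q^{\,\eta}(\mathcal{D}^{\star})\subseteq\mathcal{L}_Q^{\,\eta}$, which holds because Definition~\ref{def:Learning_classical_quantum} merely requires the existence of \emph{some} admissible sequence, and $\mathcal{D}^{\star}$ is one such witness; hence $f_1$ (and the relevant member of $\{f_2,f_3\}$) is quantum $\eta$-learnable, so the invariance hypothesis is not even needed on this side. For the classical non-membership I would argue by contradiction: suppose $f_1\in\mathcal{L}_C^{\,\eta}$. Since $f_1$ is assumed to satisfy the classical invariance property, the classical version of Proposition~\ref{prop:suff_inf} then forces $f_1$ to be classical $\eta$-learnable using \emph{any} sequence obeying \eqref{eq:prop_suff_inf}, in particular $\mathcal{D}^{\star}$, i.e.\ $f_1\in\mathcal{L}_C^{\,\eta}(\mathcal{D}^{\star})$. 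This contradicts the separation recorded in Theorem~\ref{thm:examples}, so $f_1\notin\mathcal{L}_C^{\,\eta}$ and therefore $f_1\in\mathcal{L}_Q^{\,\eta}\backslash\mathcal{L}_C^{\,\eta}$.

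The same contrapositive runs verbatim for whichever of $f_2,f_3$ Theorem~\ref{thm:examples} places in $\mathcal{L}_Q^{\,\eta}(\mathcal{D}^{\star})\backslash\mathcal{L}_C^{\,\eta}(\mathcal{D}^{\star})$, since both functions are assumed classically invariant; one need not identify which function it is, as the argument is uniform in that choice, and the conclusion ``at least one of $f_2,f_3$'' is inherited directly from Theorem~\ref{thm:examples}. I expect the only delicate point to be bookkeeping rather than new mathematics: one must verify that the single sequence $\mathcal{D}^{\star}$ produced by Theorem~\ref{thm:examples} simultaneously serves as (a) an admissible witness for quantum learnability in the sense of Definition~\ref{def:Learning_classical_quantum} and (b) a sequence of the exact form \eqref{eq:prop_suff_inf} required to invoke Proposition~\ref{prop:suff_inf} in the classical direction, so that the contradiction is drawn against the correct sequence. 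Once this alignment is checked, the corollary follows immediately by combining Theorem~\ref{thm:examples} with Proposition~\ref{prop:suff_inf}, exactly as anticipated in the discussion following Proposition~\ref{prop:suff_inf}.
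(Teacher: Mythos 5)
Your proposal is correct and follows essentially the same route as the paper, which obtains the corollary exactly by combining Theorem~\ref{thm:examples} (providing the sequence $\mathcal{D}^{\star}$ satisfying \eqref{eq:prop_suff_inf} with the per-sequence separation) with the contrapositive of Proposition~\ref{prop:suff_inf} for classical non-learnability, and the trivial inclusion $\mathcal{L}_Q^{\,\eta}(\mathcal{D}^{\star})\subseteq\mathcal{L}_Q^{\,\eta}$ for quantum learnability. Your bookkeeping observation that the support convention $\mathrm{supp}(D_m)=\{x\in\mathcal{X}:b(x)\leq m\}$ makes \eqref{eq:prop_suff_inf_23} and \eqref{eq:prop_suff_inf} coincide is precisely the alignment the paper relies on implicitly.
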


Therefore, proving the classical invariance property for these functions implies that they cannot be classically learned independently of the distribution used.

\section{Conclusions}\label{conclusions}

This work contributes to unraveling the relationship between quantum advantage in supervised learning and quantum advantage in computational tasks. With a particular emphasis on the practical scenario where the training set can be efficiently generated by an algorithm, the study delves into this connection from a computational standpoint. By conducting a thorough analysis, it offers valuable insights into the applicability of quantum algorithms for learning tasks, thereby enhancing our understanding of their suitability in practical settings.

Although new insights are provided, still, interesting open questions remain. For instance, it is not clear whether functions $\eta(m)$ different from $\eta(m)=m^u$ satisfy the relation $\mathcal{L}_Q^{\,\eta}\backslash \mathcal{L}_C^{\,\eta}\neq \mathcal{P}_Q^{\,\eta}\backslash \mathcal{P}_C^{\,\eta}$ under the conjecture that \textbf{BQP}$\neq$\textbf{BPP}. Another problem that remains open  is the characterization of the set $\tilde{\mathcal{G}}(\mathcal{D})$, appearing in Proposition 2, and its relation with the sets $\mathcal{G}_C(\mathcal{D})$ and $\mathcal{G}_Q(\mathcal{D})$. As mentioned in Appendix \ref{app:Thm_1}, this characterization requires understanding the relationship between $\mathcal{G}_Q(\mathcal{D})\backslash \mathcal{G}_C(\mathcal{D})$ and $\mathcal{L}_C^{\,\eta}(\mathcal{D})$. A solution to this problem would help to identify learning tasks for which there is a quantum speed-up, and a quantum algorithm can generate the training set efficiently.

\bibliographystyle{ieeetr}
\bibliography{bibliography.bib}

\smallskip\smallskip\smallskip

\textbf{Acknowledgements}: This work has been funded by grants PID2019-104958RB-C41 funded by MCIN/AEI/ 10.13039/501100011033 and by grants 2021 SGR 01033 and QuantumCAT 001-P-001644 funded by Departament de Recerca i Universitats de la Generalitat de Catalunya 10.13039/501100002809 and by the European Regional Development Funds (ERDF).

\newpage

\begin{appendices}

\setcounter{equation}{0}
\renewcommand{\theequation}{\thesection.\arabic{equation}}
    
    \section{Proof of Proposition \ref{proposition_neq}}\label{app:separation}

    In this section, we reformulate the results of \cite{huang2021power} about the complexity difference between computational and learning tasks. In particular, we show that $\mathcal{L}_Q^{\,1}\backslash \mathcal{L}_C^{\,1}\neq \mathcal{P}_Q^{\,1}\backslash \mathcal{P}_C^{\,1}$ under the conjecture that \textnormal{\textbf{BQP}}$\neq$\textnormal{\textbf{BPP}}. This is done by constructing a function for which there is a computational speed-up, but there is no learning speed-up.

    To design a function $f(x)$ that satisfies this condition, we use Proposition 1 of \cite{huang2021power}.
    
    \begin{proposition}{\cite{huang2021power}} Consider input vector $x\in \mathbb{R}^p$ encoded into a $k$-qubit state $\ket{x}=\sum_{i=1}^p x_i \ket{i}$. If a randomized classical algorithm can compute
    \begin{equation}
        f(x,U,O)=\bra{x} U^\dagger O U \ket{x}
    \end{equation}
    up to $0.15$-error with high probability over the randomness in the classical algorithm for any $k$, $U$, and $O$ in a time polynomial to the description length of $U$ and $O$, the input vector size $p$, and the qubit system size $k$, then \textnormal{\textbf{BQP}}=\textnormal{\textbf{BPP}}.
    \end{proposition}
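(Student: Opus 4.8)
The goal is to derive \textbf{BQP}$=$\textbf{BPP} from the hypothesis that $f(x,U,O)=\bra{x}U^\dagger O U\ket{x}$ can be estimated to within $0.15$ by an efficient randomized classical algorithm. Since \textbf{BPP}$\subseteq$\textbf{BQP} is standard (a quantum machine can simulate any classical randomized computation), the plan is to prove only the reverse inclusion \textbf{BQP}$\subseteq$\textbf{BPP}, by taking an arbitrary $L\in$ \textbf{BQP} and using the hypothesized estimator as a subroutine to build a bounded-error probabilistic polynomial-time decider for $L$. First I would invoke the canonical form of a \textbf{BQP} computation: $L\in$ \textbf{BQP} iff there is a uniform family of polynomial-size circuits such that, on input $w$ with $|w|=n$, a circuit $U_w$ acts on $k=\mathrm{poly}(n)$ qubits initialized in $\ket{0}^{\otimes k}$, with acceptance probability $\bra{0^k}U_w^\dagger\,\Pi\,U_w\ket{0^k}$, where $\Pi$ projects onto the accepting subspace; the promise is that this probability is at least $2/3$ on yes-instances and at most $1/3$ on no-instances.

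Next I would instantiate the triple handed to the classical estimator. I set $U=U_w$, $O=\Pi$, and choose $x$ to be the amplitude vector of the computational basis state $\ket{0^k}$, so that $\ket{x}=\ket{0^k}$ and hence $f(x,U,O)=\bra{0^k}U_w^\dagger\Pi U_w\ket{0^k}$ equals the \textbf{BQP} acceptance probability exactly. The decider runs the estimator on $(x,U,O)$, obtains a value $\hat f$ with $|\hat f-f|<0.15$ with high probability, and accepts iff $\hat f>1/2$. The error analysis then closes the argument: half the \textbf{BQP} promise gap is $\tfrac12(\tfrac23-\tfrac13)=\tfrac16>0.15$, so on a yes-instance $\hat f>\tfrac23-0.15>\tfrac12$ and on a no-instance $\hat f<\tfrac13+0.15<\tfrac12$; thresholding at $\tfrac12$ therefore decides $L$ correctly with high probability, which standard majority-vote amplification boosts as needed.

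The step I expect to be the main obstacle is the bookkeeping that keeps the reduction polynomial despite the exponential ambient Hilbert-space dimension $2^k$. The key observation is that every input handed to the estimator admits a succinct description: $U_w$ is a polynomial-size circuit, $\Pi$ is an elementary observable, and, although $x$ nominally lives in $\mathbb{R}^{2^k}$, it is the single basis state $\ket{0^k}$ and so carries an $O(k)$-bit description rather than $2^k$ explicit amplitudes. With all descriptions of size $\mathrm{poly}(n)$, the estimator's running time---polynomial in these descriptions and in $k$---is $\mathrm{poly}(n)$, making the decider a genuine \textbf{BPP} algorithm and yielding \textbf{BQP}$\subseteq$\textbf{BPP}. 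The only other point requiring care is confirming that the tolerance $0.15$ sits strictly below half the gap so that a single threshold separates the two cases, which is exactly the numeric check $0.15<\tfrac16$ above.
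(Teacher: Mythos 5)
Your reduction is correct: choosing $\ket{x}=\ket{0^k}$, $U=U_w$, and $O=\Pi$ makes $f(x,U,O)$ exactly the \textbf{BQP} acceptance probability, all descriptions are $\mathrm{poly}(n)$-sized, and thresholding at $\tfrac12$ works because $0.15<\tfrac16$, which together with the trivial inclusion \textbf{BPP}$\subseteq$\textbf{BQP} gives the claim. Note that this paper does not reprove the proposition but imports it from \cite{huang2021power}; your argument is essentially the same canonical reduction used in that reference (there the observable is taken to be a Pauli-$Z$ measurement of the output qubit rather than a projector, a cosmetic difference the present paper itself exploits when it remarks that the result holds with $O=Z\otimes I\otimes\cdots\otimes I$ fixed).
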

    
    Interestingly, from the proof of this proposition, it follows that the same result is satisfied if we fix the observable to be $O=Z\otimes I \otimes \cdots \otimes I$, i.e., only the first qubit is measured. Furthermore, the condition of computing an accurate estimate with high probability can be relaxed to estimate $f(x,U,Z\otimes I \otimes \cdots \otimes I)$ up to 0.15-error with probability $1-\delta>1 /2$.

    Hence, assuming that \textnormal{\textbf{BQP}}$\neq$\textnormal{\textbf{BPP}}, then there exists an unbounded number of unitaries $U$ for which $f(x,U,Z\otimes I\otimes \cdots \otimes I)$ cannot be estimated in polynomial time up to $0.15$-error. Consequently, we can define a sequence $U_k$, such that there is an infinite number of values $k$ for which the circuit cannot be estimated classically up to 0.15-error. Moreover, each of the circuits applies non-trivially on $k$ qubits states, and it is formed by a polynomial in $k$ gates.

    Using this sequence, we design our example, whose domain is $\mathbb{Z}_q^*=\bigcup_{\ell \geq 1} \{0,\cdots,q-1\}^\ell$. Therefore, $b(x)=\left \lceil \log q \right \rceil  \cdot \ell(x)$, where $\ell(x)$ denotes the length of string $x$. In particular, the function is defined as
    \begin{equation}\label{example_function}
        f(x)= \ell(x) \bra{x} U_{\ell(x)}^\dagger\left( Z\otimes I\otimes \cdots \otimes I \right) U_{\ell(x)} \ket{x}
    \end{equation}
    where $\ket{x}=\frac{1}{\left\|x \right\|_2}\sum_{i=1}^{\ell(x)}x_i\ket{i}$. Next, we prove that this function yields a computational speed-up but not a learning speed-up.  

        \subsection{$f(x)$ is quantum 1-computable}

            Measurements from state $U_{\ell(x)}\ket{x}$ are performed to compute $f(x)$ with a quantum algorithm. This state can be generated using $\mathrm{poly}(\ell(x))$ gates, where $\mathrm{poly}(\ell(x))$ indicates a term at
            most polynomial in $\ell(x)$. This follows since the encoding needs $O(\ell(x))$ gates \cite{1629135,sun2021asymptotically} and the unitary $U_{\ell(x)}$ requires $\mathrm{poly}(\ell(x))$. Note that the empirical mean deviates from the expected value in at most $\frac{c}{\ell(x)}$ with  probability $1-\delta$ if we use
            \begin{equation}
                O\left(\ell(x)^2\frac{\ln \left( \frac{1}{\delta}\right)}{c^2}\right) 
            \end{equation} 
            samples. The latter follows from Hoeffding's inequality. Therefore, $f(x)$ can be estimated with an error $c$ with probability $1-\delta$ in a running time 
            \begin{equation}
                O\left( \mathrm{poly}(b(x))\frac{\ln \left( \frac{1}{\delta}\right)}{c^2}\right).
            \end{equation}

        \subsection{ $f(x)$ is not classical 1-computable}

            If there exists a classical algorithm that for any input $x$ outputs in polynomial time in $b(x)$ an estimate $\hat{f}(x)$ such that
            \begin{equation}
                \mathrm{Prob}(|\hat{f}(x)-f(x)|\leq c)>\frac{1}{2},
            \end{equation}
            then the ratio 
            \begin{equation}
                \frac{\hat{f}(x)}{\ell(x)}=  \bra{x} U_{\ell(x)}^\dagger \left( Z\otimes I\otimes \cdots \otimes I \right) U_{\ell(x)} \ket{x}+ \frac{c}{\ell(x)}
            \end{equation}
            can be efficiently computed. Hence, for a sufficiently large $\ell(x)$, the function $f(x,U_{\ell(x)},Z\otimes I\otimes \cdots \otimes I)$ can be computed in polynomial time with an error smaller than 0.15, which contradicts the hardness assumption.

        \subsection{ $f(x)$ is classical 1-learnable}

    To prove that $f(x)$ is classical 1-learnable, we use that $f(x)$ can be expressed as
    \begin{equation}
        f(x)=\ell(x) \sum_{i,j=1}^{\ell(x)} x_i x_j M^{(\ell(x))}_{i,j}
    \end{equation}
    where $M^{(k)}_{i,j}= \left[U_k^\dagger \left( Z\otimes I\otimes \cdots \otimes I\right) U_k \right]_{i,j}$. As $M^{(k)}$ is an Hermitian matrix, 
    \begin{equation}
        f(x)=\ell(x)\left(\sum_{1\leq i < j\leq \ell(x)} 2 \mathrm{Re}\{M^{(\ell(x))}_{i,j}\}  x_i x_j + \sum_{i=1}^{\ell(x)} x_i^2 M^{(\ell(x))}_{i,i} \right)
    \end{equation}
    Defining the vectors $v(x):=[x_1^2,\cdots, x_{\ell(x)}^2, 2x_1x_2, \cdots, 2x_{\ell(x)-1}x_{\ell(x)} ]^T$ and 
    \begin{equation}
       v_M(\ell(x)):=[M^{(\ell(x))}_{1,1},\cdots, M^{(\ell(x))}_{\ell(x),\ell(x)}, \mathrm{Re}\{M^{(\ell(x))}_{1,2}\},\cdots,  \mathrm{Re}\{M^{(\ell(x))}_{\ell(x)-1,\ell(x)}\}]^T 
    \end{equation}
    the function becomes
    \begin{equation}
        f(x)= \ell(x) v(x)^T v_M(\ell(x))
    \end{equation}
    Therefore, a training set $\{(x_i,f(x_i))\}_{i=1}^n$, whose samples satisfy $\ell(x_i)=\ell$ for all $i\in[1:n]$, can be expressed as 
    \begin{equation}
        \begin{bmatrix}
        f(x_1)\\ 
        f(x_2)\\ 
        \cdots\\
        f(x_n)
        \end{bmatrix}=    \ell    \begin{bmatrix}
        v(x_1)^T\\ 
        v(x_2)^T\\ 
        \cdots\\
        v(x_n)^T
        \end{bmatrix} v_M(\ell)
    \end{equation}
    or equivalently, as $v_f= \ell \, V_x v_M(\ell)$, where $v_f\in \mathbb{R}^n$, $V_x\in \mathbb{R}^{n\times\frac{\ell(\ell+1)}{2}}$, and $v_M(\ell)\in \mathbb{R}^{\frac{\ell(\ell+1)}{2}}$. This implies that $v_M(\ell)$ can be computed solving a least squares problem, i.e., $v_M(\ell)=\frac{1}{\ell}(V_x^TV_x)^{-1}V_x^T v_f$ if the training set contains $n>\frac{\ell(\ell+1)}{2}$ uniformly distributed samples.
    
    In conclusion, we can estimate classically the function for any value $x$ such that $\ell(x)=\ell$ if a sufficiently large training set is given. Note that to estimate $f(x)$ for all $x$ such that $\ell(x)\leq \ell$, it is necessary $O(\ell^3)$ samples, this follows from 
    \begin{equation}
        \sum_{i=1}^\ell i^2= \frac{\ell^3}{3}+\frac{\ell^2}{2}+\frac{\ell}{6}
    \end{equation}

    \subsection{Generalization to polynomial trends}

    The result introduced in Proposition \ref{proposition_neq} can be generalized with an analogous argument for the case $\eta(m)=m^{u}$ with $u\in \mathbb{R}^+$. In particular, this generalization is based on function  
    \begin{equation}\label{example_function_generalization}
        f(x)= \ell(x)^{u+1} \bra{x} U_{\ell(x)}^\dagger\left( Z\otimes I\otimes \cdots \otimes I \right) U_{\ell(x)} \ket{x},
    \end{equation}
    which belongs to $\mathcal{P}_Q^{\,\eta} \backslash \mathcal{P}_C^{\,\eta}$ and $\mathcal{L}_C^{\,\eta}$ for $\eta(m)=m^u$. This two results follow from: $(i)$ $\bra{x} U_{\ell(x)}^\dagger\left( Z\otimes I\otimes \cdots \otimes I \right) U_{\ell(x)} \ket{x}$ can be estimated with an additive error $c/ \ell(x)^{u+1}$ in polynomial time in $b(x)$ with a quantum algorithm; $(ii)$ Similarly to the previous example, $f\notin \mathcal{P}_C^{\,\eta}$, as otherwise the ratio $f(x)/\ell(x)^{u+1}$ can be compute classically in polynomial time with an error $c/\ell(x)$ producing a contradiction; $(iii)$ $f(x)$ can be learned using the same procedure used previously.

    \section{Proofs of Theorem \ref{thm:conditions_speed_up} and Proposition \ref{prop:advantage_GQ}}\label{app:Thm_1}

    \setcounter{equation}{0}

    In this section the proofs of Theorem \ref{thm:conditions_speed_up} and Proposition \ref{prop:advantage_GQ} are shown, but first, we introduce the formal definition of sets $\mathcal{L}_C^\eta(\mathcal{D})$ and $\mathcal{L}_Q^\eta(\mathcal{D})$.
    
\begin{definition}\label{def:learning_with_distribution}
    A function $f(x)$ is classical (quantum) $\eta$-learnable for a sequence $\mathcal{D}$, denoted as $f\in \mathcal{L}_C^{\,\eta}(\mathcal{D})$ $(\mathcal{L}_Q^{\,\eta}(\mathcal{D}))$, where $\eta:\mathbb{N}\rightarrow \mathbb{R}^+$, and $\mathrm{supp}(D_m)=\bigcup_{i\leq m} \{0,1\}^i$, iff there exists both a classical (quantum) learning procedure and a polynomial $p(m)$ such that:

    Given input $x$ and $\mathcal{T}_{n}=\{(x_i,f(x_i))\}_{i=1}^{p(m)}$, where $x_i$ is distributed according to
    \begin{equation}\label{restricted_distribution}
        \mathrm{Prob}_{X\sim \tilde{D}_m}(x)=\frac{ \mathrm{Prob}_{X\sim D_m} (x)}{\sum_{x'\in \{x\in \mathcal{X}:\,b(x)\leq m\}}\mathrm{Prob}_{X\sim D_m} (x')},
    \end{equation}
    the classical (quantum) learning procedure computes an estimate $\hat{f}(x)$ in polynomial time in $m$ such that 
    \begin{equation}
        \mathrm{Prob}\left( |\hat{f}(x)-f(x)|<c\,\eta(m)\right)>1-\delta.
    \end{equation}
    for some constants $c\in \mathbb{R}^+$ and $\delta\in[0,\frac{1}{2})$, and for all $x\in \mathcal{X}$ such that $b(x)\leq m$. 
\end{definition}
    
Similarly, in the definition of sets $\mathcal{G}_C(\mathcal{D})$ and $\mathcal{G}_Q(\mathcal{D})$, $\mathcal{D}$ must be interpreted in the same way as in the previous definition. Next, the auxiliary results of Lemma \ref{complexity_identities} are demonstrated.

\begin{lemma}\label{complexity_identities}
The following identities hold:
    \begin{align*}
        &(a)\,\, \mathcal{P}_C^{\,\eta} \subseteq \mathcal{L}_C^{\,\eta}(\mathcal{D}) \;\;\;\;\;\;\;\;\;\;\;\;\;\;\;\;\,(b)\,\, \mathcal{P}_Q^{\,\eta} \subseteq \mathcal{L}_Q^{\,\eta}(\mathcal{D})  \\&(c)\,\,
         \mathcal{G}_C(\mathcal{D})\cap \mathcal{L}_C^{\,\eta}(\mathcal{D})	\subseteq \mathcal{P}_C^{\,\eta}\nonumber \;\;\;(d)\,\, \mathcal{G}_Q(\mathcal{D})\cap \mathcal{L}_Q^{\,\eta}(\mathcal{D})	\subseteq \mathcal{P}_Q^{\,\eta}\nonumber \\&(e)\,\,
        \mathcal{G}_C(\mathcal{D})\cap \mathcal{L}_Q^{\,\eta}(\mathcal{D})= \mathcal{G}_C(\mathcal{D})\cap \mathcal{P}_Q^{\,\eta}
    \end{align*}
\end{lemma}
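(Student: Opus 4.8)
The plan is to prove the five items by exhibiting explicit reductions between computational algorithms, learning procedures, and training-set generators, reusing earlier items where possible. For inclusions $(a)$ and $(b)$ the idea is that an $\eta$-computing algorithm is already a valid learning procedure that \emph{ignores} its training set: on input $(x,\mathcal{T}_n)$ it simply runs the computational algorithm on $x$. Since every $x\in\mathrm{supp}(D_m)$ satisfies $b(x)\le m$, the running time $\mathrm{poly}(b(x))$ is $\mathrm{poly}(m)$, and—using that $\eta$ is non-decreasing, as holds for the polynomial trends $\eta(m)=m^u$ of interest—the error guarantee $c\,\eta(b(x))\le c\,\eta(m)$ and the success probability $>1-\delta$ transfer verbatim. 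Hence $f\in\mathcal{L}_C^{\,\eta}(\mathcal{D})$ (resp.\ $\mathcal{L}_Q^{\,\eta}(\mathcal{D})$) for every sequence $\mathcal{D}$.

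For the reverse inclusions $(c)$ and $(d)$ the plan is to \emph{compose} a generator with a learner. Given $f\in\mathcal{G}_C(\mathcal{D})\cap\mathcal{L}_C^{\,\eta}(\mathcal{D})$, I would build a computational algorithm that, on input $x$, sets $m:=b(x)$, calls the classical generator to produce a training set $\mathcal{T}_n$ with $n\ge p(m)$ pairs drawn from $\tilde{D}_m$, and then runs the learning procedure on $(x,\mathcal{T}_n)$. The distributions match because $\mathrm{supp}(D_m)=\{x:b(x)\le m\}$, so the normalization in \eqref{restricted_distribution} is trivial and $\tilde{D}_m=D_m$; and the learner's guarantee applies to $x$ precisely because $b(x)\le m$. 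Both stages run in $\mathrm{poly}(m)=\mathrm{poly}(b(x))$ time, so the composite is efficient, and it follows that $f\in\mathcal{P}_C^{\,\eta}$. Item $(d)$ is identical with ``quantum'' replacing ``classical.''

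Finally, $(e)$ follows by combining the previous items with the trivial inclusion $\mathcal{G}_C(\mathcal{D})\subseteq\mathcal{G}_Q(\mathcal{D})$, valid because a classical generator is a special case of a quantum one. The inclusion $\supseteq$ is immediate from $(b)$: intersecting $\mathcal{P}_Q^{\,\eta}\subseteq\mathcal{L}_Q^{\,\eta}(\mathcal{D})$ with $\mathcal{G}_C(\mathcal{D})$. For $\subseteq$, take $f\in\mathcal{G}_C(\mathcal{D})\cap\mathcal{L}_Q^{\,\eta}(\mathcal{D})$; then $f\in\mathcal{G}_Q(\mathcal{D})\cap\mathcal{L}_Q^{\,\eta}(\mathcal{D})\subseteq\mathcal{P}_Q^{\,\eta}$ by $(d)$, and since $f\in\mathcal{G}_C(\mathcal{D})$ this places $f$ in $\mathcal{G}_C(\mathcal{D})\cap\mathcal{P}_Q^{\,\eta}$. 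The conceptual crux is that classical generability lets a quantum computational algorithm manufacture its own training data, collapsing the learning task to a purely computational one.

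The step I expect to be the main obstacle is the probability bookkeeping in $(c)$ and $(d)$. One must argue that the generator's failure probability can be forced below $\tfrac{1}{2}-\delta$—so that, by a union bound, the composite succeeds with probability $>1-(\gamma+\delta)>\tfrac{1}{2}$ and stays bounded away from $\tfrac{1}{2}$ as required by the definition of $\mathcal{P}_C^{\,\eta}$/$\mathcal{P}_Q^{\,\eta}$. This relies on interpreting ``high probability'' in the definition of generability as being amplifiable arbitrarily close to $1$, and on ensuring the generated sample distribution is exactly the $\tilde{D}_m$ against which the learner was guaranteed; the monotonicity of $\eta$ invoked in $(a)$/$(b)$ is a comparatively minor caveat.
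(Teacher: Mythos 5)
Your proposal is correct and follows essentially the same route as the paper: $(a)$/$(b)$ by letting the learner ignore the training set, $(c)$/$(d)$ by composing the generator with the learner, and $(e)$ by combining $(b)$, $(d)$, and $\mathcal{G}_C(\mathcal{D})\subseteq\mathcal{G}_Q(\mathcal{D})$ (the paper phrases this last step as set algebra rather than element-wise, but the logic is identical). If anything, you are more careful than the paper, which silently skips the failure-probability bookkeeping and the monotonicity of $\eta$ needed in $(a)$/$(b)$.
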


\begin{proof}
    The identities $\mathcal{P}_C^{\,\eta} \subseteq \mathcal{L}_C^{\,\eta}(\mathcal{D})$ and $\mathcal{P}_Q^{\,\eta} \subseteq \mathcal{L}_Q^{\,\eta}(\mathcal{D})$ follow trivially as the learning procedure introduced in Definition \ref{def:learning_with_distribution} may not use the additional information given by the training set $\mathcal{T}_n$.

     As for identity $(c)$, note that if $f\in 
     \mathcal{L}_C^{\,\eta}(\mathcal{D}) \cap \mathcal{G}_C(\mathcal{D})$, then there exists an algorithm that runs in polynomial time that computes $f(x)$. The algorithm on input $x$ generates a training set consisting of $n=p(b(x))$ pairs, where $p(b)$ is a polynomial. The latter can be done efficiently since $f\in \mathcal{G}_C(\mathcal{D})$. Hereafter, a learning procedure on input $x$, and $\mathcal{T}_n$ is used to obtain an estimate $\hat{f}(x)$, which can also be done efficiently since $f\in \mathcal{L}_C^{\,\eta}(\mathcal{D})$. Therefore, $f\in \mathcal{P}_C^{\,\eta}$, or equivalently, $\mathcal{L}_C^{\,\eta}(\mathcal{D}) \cap \mathcal{G}_C(\mathcal{D})\subseteq \mathcal{P}_C^{\,\eta}$. The proof of identity $(d)$ is analogous.

    Finally, we prove identity $(e)$,
    \begin{align}
        \mathcal{L}_Q^{\,\eta}(\mathcal{D}) \cap \mathcal{G}_C(\mathcal{D})&= \mathcal{G}_C(\mathcal{D}) \cap \left( \mathcal{P}_Q^{\,\eta}\cup(\mathcal{L}_Q^{\,\eta}(\mathcal{D})\backslash\mathcal{P}_Q^{\,\eta}) \right)\nonumber\\&=\left(\mathcal{G}_C(\mathcal{D})\cap \mathcal{P}_Q^{\,\eta} \right)\cup \left( \mathcal{G}_C(\mathcal{D})\cap \left(\mathcal{L}_Q^{\,\eta}(\mathcal{D})\backslash \mathcal{P}_Q^{\,\eta}\right) \right)\nonumber\\&=\mathcal{G}_C(\mathcal{D})\cap \mathcal{P}_Q^{\,\eta} 
    \end{align}
    
    where the first step follows from the fact that if $A\subseteq B$ then $B=A\cup (B\backslash A)$, and the third step uses $(\mathcal{L}_Q^{\,\eta}(\mathcal{D})\backslash \mathcal{P}_Q^{\,\eta})\cap \mathcal{G}_C(\mathcal{D})=\emptyset$. To prove this last identity, note that using $\mathcal{G}_C(\mathcal{D}) \subseteq \mathcal{G}_Q(\mathcal{D})$, we obtain that 
    \begin{equation}
        \mathcal{L}_Q^{\,\eta}(\mathcal{D})\cap \mathcal{G}_C(\mathcal{D})\subseteq	\mathcal{L}_Q^{\,\eta}(\mathcal{D})\cap \mathcal{G}_Q(\mathcal{D})\subseteq\mathcal{P}_Q^{\,\eta}
    \end{equation}
    This means that if $f\in \mathcal{L}_Q^{\,\eta}(\mathcal{D})\backslash \mathcal{P}_Q^{\,\eta}$ then $f\notin \mathcal{L}_Q^{\,\eta}(\mathcal{D})\cap \, \mathcal{G}_C(\mathcal{D})$, or equivalently, $(\mathcal{L}_Q^{\,\eta}(\mathcal{D})\backslash \mathcal{P}_Q^{\,\eta})\cap \mathcal{G}_C(\mathcal{D})=\emptyset$.
\end{proof}
    
    \subsection{Proof of Theorem \ref{thm:conditions_speed_up}}
Now, using the results of the previous lemma Theorem \ref{thm:conditions_speed_up} is proved, i.e.,
\begin{equation}
    \left(\mathcal{L}_Q^{\,\eta}(\mathcal{D})\backslash \mathcal{L}_C^{\,\eta}(\mathcal{D})\right)\cap \mathcal{G}_C(\mathcal{D}) = \left(\mathcal{P}_Q^{\,\eta}\backslash \mathcal{P}_C^{\,\eta}\right)\cap \mathcal{G}_C(\mathcal{D})
\end{equation}
We use that if $A\subseteq B$ and $B\subseteq A$ then $A=B$. First, we show that $(\mathcal{L}_Q^{\,\eta}(\mathcal{D})\backslash \mathcal{L}_C^{\,\eta}(\mathcal{D}))\cap \mathcal{G}_C(\mathcal{D})\supseteq (\mathcal{P}_Q^{\,\eta}\backslash \mathcal{P}_C^{\,\eta})\cap \mathcal{G}_C(\mathcal{D})$. 
    \begin{align*}
        \left(\mathcal{L}_Q^{\,\eta}(\mathcal{D})\backslash\mathcal{L}_C^{\,\eta}(\mathcal{D})\right) \cap \mathcal{G}_C(\mathcal{D}) &= \left(\mathcal{L}_Q^{\,\eta}(\mathcal{D}) \cap \mathcal{G}_C(\mathcal{D})\right) \backslash \left(\mathcal{L}_C^{\,\eta}(\mathcal{D}) \cap \mathcal{G}_C(\mathcal{D})\right)\nonumber \\& \supseteq \left(\mathcal{L}_Q^{\,\eta}(\mathcal{D}) \cap \mathcal{G}_C(\mathcal{D})\right) \backslash \mathcal{P}_C^{\,\eta} \supseteq \mathcal{P}_Q^{\,\eta} \cap \mathcal{G}_C(\mathcal{D}) \cap \overline{\mathcal{P}_C^{\,\eta}}
    \end{align*}
    In the first inequality is used that $\mathcal{G}_C(\mathcal{D})\cap \mathcal{L}_C^{\,\eta}(\mathcal{D})	\subseteq \mathcal{P}_C^{\,\eta}$, and the second inequality follows from $\mathcal{P}_Q^{\,\eta} \subseteq \mathcal{L}_Q^{\,\eta}(\mathcal{D})$. Similarly,
    \begin{align*}
        \left(\mathcal{L}_Q^{\,\eta}(\mathcal{D})\backslash \mathcal{L}_C^{\,\eta}(\mathcal{D})\right)\cap \mathcal{G}_C(\mathcal{D})&= \left(\mathcal{L}_Q^{\,\eta}(\mathcal{D}) \cap \mathcal{G}_C(\mathcal{D})\right) \backslash \left(\mathcal{L}_C^{\,\eta}(\mathcal{D}) \cap \mathcal{G}_C(\mathcal{D})\right)\\&=\left(\mathcal{P}_Q^{\,\eta} \cap \mathcal{G}_C(\mathcal{D})\right)\backslash \left(\mathcal{L}_C^{\,\eta}(\mathcal{D}) \cap \mathcal{G}_C(\mathcal{D})\right)\nonumber \\&=\left(\mathcal{P}_Q^{\,\eta} \backslash \mathcal{L}_C^{\,\eta}(\mathcal{D})\right)\cap \mathcal{G}_C(\mathcal{D})\subseteq \left(\mathcal{P}_Q^{\,\eta} \backslash \mathcal{P}_C^{\,\eta}\right)\cap \mathcal{G}_C(\mathcal{D})
    \end{align*}
    
    The second equality is obtained using $\mathcal{G}_C(\mathcal{D})\cap \mathcal{L}_Q^{\,\eta}(\mathcal{D})= \mathcal{G}_C(\mathcal{D})\cap \mathcal{P}_Q^{\,\eta}$, and the inequality follows from $\mathcal{P}_C^{\,\eta}\subseteq \mathcal{L}_C^{\,\eta}(\mathcal{D})$. Hence, 
    \begin{equation}
        \left(\mathcal{L}_Q^{\,\eta}(\mathcal{D})\backslash \mathcal{L}_C^{\,\eta}(\mathcal{D})\right)\cap \mathcal{G}_C(\mathcal{D})=\left(\mathcal{P}_Q^{\,\eta} \backslash \mathcal{P}_C^{\,\eta}\right)\cap \mathcal{G}_C(\mathcal{D}).
    \end{equation}

    \subsection{Proof of Proposition \ref{prop:advantage_GQ}}

    First, note that Proposition $\ref{prop:advantage_GQ}$ is equivalent to
    \begin{equation}\label{first_inclusion_prop}
        \left(\mathcal{P}_Q^{\,\eta}\backslash \mathcal{P}_C^{\,\eta}\right)\cap \mathcal{G}_C(\mathcal{D})\subseteq \left(\mathcal{L}_Q^{\,\eta}(\mathcal{D})\backslash \mathcal{L}_C^{\,\eta}(\mathcal{D})\right)\cap \mathcal{G}_Q(\mathcal{D})
    \end{equation}
    and 
    \begin{equation}
        \left(\mathcal{P}_Q^{\,\eta}\backslash \mathcal{P}_C^{\,\eta}\right)\cap \mathcal{G}_Q(\mathcal{D})\supseteq \left(\mathcal{L}_Q^{\,\eta}(\mathcal{D})\backslash \mathcal{L}_C^{\,\eta}(\mathcal{D})\right)\cap \mathcal{G}_Q(\mathcal{D})
    \end{equation}
    The first inclusion follows using $\mathcal{G}_C(\mathcal{D})\subseteq \mathcal{G}_Q(\mathcal{D})$, and Theorem \ref{thm:conditions_speed_up}.
    On the other hand, 
    \begin{align}
        \left(\mathcal{L}_Q^{\,\eta}(\mathcal{D})\backslash\mathcal{L}_C^{\,\eta}(\mathcal{D})\right) \cap \mathcal{G}_Q(\mathcal{D}) &= \left(\left(\mathcal{L}_Q^{\,\eta}(\mathcal{D}) \cap \mathcal{G}_Q(\mathcal{D})\right) \cap \overline{\mathcal{L}_C^{\,\eta}(\mathcal{D}})\right)\cap \mathcal{G}_Q(\mathcal{D}) \nonumber \\&\subseteq \left(\mathcal{P}_Q^{\,\eta} \cap \overline{\mathcal{L}_C^{\,\eta}(\mathcal{D}}) \right) \cap \mathcal{G}_Q(\mathcal{D})\nonumber \\&\subseteq  \left(\mathcal{P}_Q^{\,\eta} \cap \overline{\mathcal{P}_C^{\,\eta}}\right)\cap \mathcal{G}_Q(\mathcal{D})
    \end{align}
where the first equality follows from $A\cap A = A$, the first inclusion uses that $\mathcal{L}_Q^{\,\eta}(\mathcal{D}) \cap \mathcal{G}_Q(\mathcal{D}) \subseteq \mathcal{P}_Q^{\,\eta}$, and for the second inclusion is used that $\mathcal{P}_C^{\,\eta} \subseteq \mathcal{L}_C^{\,\eta}(\mathcal{D})$.

\subsection{Characterization of $\tilde{\mathcal{G}}(\mathcal{D})$}

The previous proof shows the existence of a set $\tilde{\mathcal{G}}(\mathcal{D})$ satisfying the identities \eqref{eq_prop_2_1} and \eqref{eq_conditions_speed_up_Q}, but it does not characterize this set. Next, we characterize it.

First, as 
\begin{equation*}
    \left(\mathcal{L}_Q^{\,\eta}(\mathcal{D})\backslash\mathcal{L}_C^{\,\eta}(\mathcal{D})\right)\cap \mathcal{G}_Q(\mathcal{D})=\left(\mathcal{P}_Q^{\,\eta}\backslash\mathcal{P}_C^{\,\eta}\right)\cap \tilde{\mathcal{G}}(\mathcal{D})
\end{equation*}
and $\tilde{\mathcal{G}}(\mathcal{D})\subseteq \mathcal{G}_Q(\mathcal{D})$, then
\begin{align}
    \left(\mathcal{P}_Q^{\,\eta}\backslash\mathcal{P}_C^{\,\eta}\right)\cap \tilde{\mathcal{G}}(\mathcal{D})&=\left(\left(\mathcal{L}_Q^{\,\eta}(\mathcal{D})\backslash\mathcal{L}_C^{\,\eta}(\mathcal{D})\right)\cap   \mathcal{G}_Q(\mathcal{D})\right) \cap \left(\left(\mathcal{P}_Q^{\,\eta}\backslash\mathcal{P}_C^{\,\eta}\right)\cap   \mathcal{G}_Q(\mathcal{D})\right)\nonumber\\&= \left(\mathcal{P}_Q^{\,\eta}\backslash\mathcal{P}_C^{\,\eta}\right) \cap \left(\mathcal{L}_Q^{\,\eta}(\mathcal{D})\backslash\mathcal{L}_C^{\,\eta}(\mathcal{D})\right) \cap \mathcal{G}_Q(\mathcal{D})
\end{align}
As $\mathcal{G}_C(\mathcal{D})\subseteq \mathcal{G}_Q(\mathcal{D})$,  
\begin{equation}
    \left(\mathcal{P}_Q^{\,\eta}\backslash\mathcal{P}_C^{\,\eta}\right) \cap \left(\mathcal{L}_Q^{\,\eta}(\mathcal{D})\backslash\mathcal{L}_C^{\,\eta}(\mathcal{D})\right) \cap \left(\mathcal{G}_Q(\mathcal{D}) \backslash \mathcal{G}_C(\mathcal{D}) \cup \mathcal{G}_C(\mathcal{D})\right)
\end{equation}
and equivalently,
\begin{equation}
    \left(\mathcal{P}_Q^{\,\eta}\backslash\mathcal{P}_C^{\,\eta}\right) \cap \left(\left[\mathcal{L}_Q^{\,\eta}(\mathcal{D})\backslash\mathcal{L}_C^{\,\eta}(\mathcal{D})\cap \mathcal{G}_C(\mathcal{D})\right] \cup \left[ \mathcal{L}_Q^{\,\eta}(\mathcal{D})\backslash\mathcal{L}_C^{\,\eta}(\mathcal{D}) \cap \mathcal{G}_Q(\mathcal{D}) \backslash \mathcal{G}_C(\mathcal{D}) \right]\right)
\end{equation}
Using Theorem \ref{thm:conditions_speed_up}, and some trivial identities, the previous expression becomes, 
\begin{align}
       &\left(\mathcal{P}_Q^{\,\eta}\backslash\mathcal{P}_C^{\,\eta}\right) \cap \left(\left[\mathcal{P}_Q^{\,\eta}\backslash\mathcal{P}_C^{\,\eta}\cap \mathcal{G}_C(\mathcal{D})\right] \cup \left[ \mathcal{L}_Q^{\,\eta}(\mathcal{D})\backslash\mathcal{L}_C^{\,\eta}(\mathcal{D}) \cap \mathcal{G}_Q(\mathcal{D}) \backslash \mathcal{G}_C(\mathcal{D}) \right]\right)
       \nonumber\\&=\left(\mathcal{P}_Q^{\,\eta}\backslash\mathcal{P}_C^{\,\eta}\cap \mathcal{G}_C(\mathcal{D})\right) \cup \left(\mathcal{P}_Q^{\,\eta}\backslash\mathcal{P}_C^{\,\eta}\cap \mathcal{L}_Q^{\,\eta}(\mathcal{D})\backslash\mathcal{L}_C^{\,\eta}(\mathcal{D}) \cap \mathcal{G}_Q(\mathcal{D}) \backslash \mathcal{G}_C(\mathcal{D}) \right)\nonumber\\&=       \left(\mathcal{P}_Q^{\,\eta}\backslash\mathcal{P}_C^{\,\eta}\cap \mathcal{G}_C(\mathcal{D})\right) \cup \left(\mathcal{P}_Q^{\,\eta}\cap\overline{\mathcal{P}_C^{\,\eta}}\cap \overline{\mathcal{L}_C^{\,\eta}(\mathcal{D}) }\cap \mathcal{G}_Q(\mathcal{D}) \backslash \mathcal{G}_C(\mathcal{D}) \right)\nonumber\\&=
       \left(\mathcal{P}_Q^{\,\eta}\backslash\mathcal{P}_C^{\,\eta}\right) \cap \left(  \mathcal{G}_C(\mathcal{D}) \cup \left( \mathcal{G}_Q(\mathcal{D}) \backslash \mathcal{G}_C(\mathcal{D}) \cap \overline{\mathcal{L}_C^{\,\eta}(\mathcal{D}) }\right)\right)
\end{align}
Therefore, $\tilde{\mathcal{G}}(\mathcal{D})=\mathcal{G}_C(\mathcal{D}) \cup \left(  \mathcal{G}_Q(\mathcal{D}) \backslash \mathcal{G}_C(\mathcal{D}) \cap \overline{\mathcal{L}_C^{\,\eta} (\mathcal{D})} \right)$. Hence, in case that $\mathcal{G}_Q(\mathcal{D}) \backslash \mathcal{G}_C(\mathcal{D}) \cap \overline{\mathcal{L}_C^{\,\eta}(\mathcal{D}) }=\mathcal{G}_Q(\mathcal{D}) \backslash \mathcal{G}_C(\mathcal{D})$, then $\tilde{\mathcal{G}}(\mathcal{D})$ is equal to $\mathcal{G}_Q(\mathcal{D})$. Similarly, if $\mathcal{G}_Q(\mathcal{D}) \backslash \mathcal{G}_C(\mathcal{D}) \cap \overline{\mathcal{L}_C^{\,\eta}(\mathcal{D}) }=\emptyset$ then $\tilde{\mathcal{G}}(\mathcal{D})=\mathcal{G}_C(\mathcal{D})$.

    \section{Proof of Proposition \ref{prop:suff_inf}}\label{appendix:suff_informative}

    \setcounter{equation}{0}

    This section shows the proof of Proposition \ref{prop:suff_inf}. The proof is based on the invariance property and Lemma \ref{lemma:increase_prob}. First, this property is introduced formally.

        \begin{definition}\label{def:invariance_property}
    A function $f(x)$ satisfies the classical (quantum) invariance property if there exists a cover $\{R_j\}_{j=1}^{r(m)}$ of $\{x\in \mathcal{X}: b(x)\leq m\}$ formed by a polynomial in $m$ number of subsets $R_j$ of the same cardinality, such that 
    
    \begin{enumerate}[(i)]
            \item for any pair of training sets $\mathcal{T}_n(D_m)$ and $\mathcal{T}_{n'}(D_m')$ for which
        \begin{equation*}\label{eq_inv_res}
             \mathrm{Prob}_{\mathcal{T}_n(D_m),\mathcal{T}_{n'}(D_m')} \left( N_{R_j}(\mathcal{T}_n(D_m))\leq N_{R_j}(\mathcal{T}_{n'}(D_m')) \text{ for all }j\in[1: r(m)]\right)\geq 1-\phi
        \end{equation*}
        where $\phi\in(0,1)$ and $N_{R_j}(\mathcal{T}_n(D_m))$ denotes the number of distinct elements of $\mathcal{T}_n(D_m)$ with a value $x\in R_j$,
        \item and for any classical (quantum) efficient learning procedure $M(x,\mathcal{T}_n(D_m))=\hat{f}(x)$,
    \end{enumerate}
    
       there exists a classical (quantum) efficient learning procedure $M'(x,\mathcal{T}_{n'}(D_m'))=\hat{f}'(x)$ such that 
    \begin{equation}
        \mathrm{Prob}_{\mathcal{T}_{n}(D_m),\mathcal{T}_{n'}(D_m')}\left(|\hat{f}'(x)-f(x)|\leq |\hat{f}(x)-f(x)|\right) \geq 1-  \phi
    \end{equation}
    for all $x\in\{x\in \mathcal{X}:b(x)\leq m\}$.
    \end{definition}

    Next, we prove the following auxiliary result.
    
    \begin{lemma}\label{lemma:increase_prob}
    If estimates $\hat{f}(x)=M(x,\mathcal{T}_{n}(D_m))$ and $\hat{f}'(x)=M'(x,\mathcal{T}_{n'}(D_m'))$ satisfy
    \begin{equation}
        \mathrm{Prob}_{\mathcal{T}_{n}(D_m),\mathcal{T}_{n'}(D_m')}\left(|\hat{f}'(x)-f(x)|\leq |\hat{f}(x)-f(x)|\right) \geq 1-  \phi
    \end{equation}
    then 
    \begin{equation}
        \mathrm{Prob}_{\mathcal{T}_{n'}(D_m')} ( |\hat{f}'(x)-f(x)|< c\,\eta(m))\geq \mathrm{Prob}_{\mathcal{T}_{n}(D_m)} ( |\hat{f}(x)-f(x)|< c\,\eta(m))- \phi
    \end{equation}
    for any $\eta(m)$ and $c\in \mathbb{R}^+$.
    \end{lemma}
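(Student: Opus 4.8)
The plan is to reduce the statement to an elementary inequality between probabilities of events in the joint probability space of the two training sets. Working over the joint law of $(\mathcal{T}_{n}(D_m),\mathcal{T}_{n'}(D_m'))$, I would fix $x$ and introduce three events: the ``good approximation'' event $A:=\{|\hat{f}(x)-f(x)|<c\,\eta(m)\}$ for the first procedure, the analogous event $A':=\{|\hat{f}'(x)-f(x)|<c\,\eta(m)\}$ for the second, and the ``domination'' event $B:=\{|\hat{f}'(x)-f(x)|\leq|\hat{f}(x)-f(x)|\}$. The hypothesis is exactly $\mathrm{Prob}(B)\geq 1-\phi$, and since $A$ and $A'$ each depend on a single training set, their marginals under the joint law coincide with the quantities $\mathrm{Prob}_{\mathcal{T}_{n}(D_m)}(\cdot)$ and $\mathrm{Prob}_{\mathcal{T}_{n'}(D_m')}(\cdot)$ appearing in the statement.

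The key deterministic observation is the containment $A\cap B\subseteq A'$: on any outcome where both $|\hat{f}(x)-f(x)|<c\,\eta(m)$ and $|\hat{f}'(x)-f(x)|\leq|\hat{f}(x)-f(x)|$ hold, chaining the two inequalities yields $|\hat{f}'(x)-f(x)|<c\,\eta(m)$, i.e. $A'$ holds. This is the step in which $\hat{f}'$ inherits the accuracy of $\hat{f}$, and it is what makes the domination hypothesis useful.

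Having established this, I would conclude with the standard bound $\mathrm{Prob}(A\cap B)\geq \mathrm{Prob}(A)-\mathrm{Prob}(B^{c})$, which follows from the decomposition $\mathrm{Prob}(A)=\mathrm{Prob}(A\cap B)+\mathrm{Prob}(A\cap B^{c})$ together with $\mathrm{Prob}(A\cap B^{c})\leq \mathrm{Prob}(B^{c})\leq\phi$. Chaining these with the containment gives
$$\mathrm{Prob}(A')\geq\mathrm{Prob}(A\cap B)\geq\mathrm{Prob}(A)-\phi,$$
which is precisely the claimed inequality once $\mathrm{Prob}(A')$ and $\mathrm{Prob}(A)$ are read as the stated marginals.

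I expect no substantive obstacle here: the only point requiring care is to set up a single joint probability space so that all three events live together and the marginals of $A$ and $A'$ are the intended ones; everything else is the set inclusion $A\cap B\subseteq A'$ plus a one-line union-type bound. In particular the argument never uses the specific form of $\eta(m)$ or the value of $c$, so the inequality holds uniformly for any $\eta(m)$ and any $c\in\mathbb{R}^+$, as required.
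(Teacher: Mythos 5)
Your proof is correct and is essentially the paper's own argument in set-theoretic rather than conditional-probability language: your containment $A\cap B\subseteq A'$ is exactly the paper's inequality $\mathrm{Prob}(\epsilon'(x)<c\,\eta(m)\,|\,\epsilon'(x)\leq\epsilon(x))\geq \mathrm{Prob}(\epsilon(x)<c\,\eta(m)\,|\,\epsilon'(x)\leq\epsilon(x))$, and your bound $\mathrm{Prob}(A\cap B)\geq \mathrm{Prob}(A)-\mathrm{Prob}(B^{c})$ is the paper's second application of the law of total probability followed by bounding the conditional probability by $1$ and $\mathrm{Prob}(\epsilon'(x)>\epsilon(x))$ by $\phi$. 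No gap; if anything, your phrasing avoids conditioning on an event that could in principle have small probability, which is marginally cleaner.
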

    
    \begin{proof}
    Note that 
    \begin{align}
        \mathrm{Prob} \left( \epsilon'(x) < c\,\eta(m)\right)&\geq \mathrm{Prob} \left( \epsilon'(x) < c\,\eta(m)| \epsilon'(x)\leq  \epsilon(x)\right)\mathrm{Prob}(\epsilon'(x)\leq \epsilon(x))\nonumber\\&
        \geq \mathrm{Prob} \left( \epsilon(x) < c\,\eta(m)| \epsilon'(x)\leq  \epsilon(x)\right)\mathrm{Prob}(\epsilon'(x)\leq \epsilon(x))
    \end{align}
    where $\epsilon'(x):=|\hat{f}'(x)-f(x)|$ and $\epsilon(x):=|\hat{f}(x)-f(x)|$. The first step follows from the law of total probability, and the second from the inequality $ \mathrm{Prob} \left( \epsilon'(x) < c\,\eta(m)| \epsilon'(x)\leq  \epsilon(x)\right)\geq \mathrm{Prob} \left( \epsilon(x) < c\,\eta(m)| \epsilon'(x)\leq  \epsilon(x)\right)$.
    The previous expression can be rewritten as 
    \begin{align}
        \mathrm{Prob} &\left( \epsilon(x) < c\,\eta(m)\right)\nonumber\\&-\mathrm{Prob} \left( \epsilon(x) < c\,\eta(m)| \epsilon'(x)>  \epsilon(x)\right)\mathrm{Prob}(\epsilon'(x)> \epsilon(x))
    \end{align}
    Next, using that $\mathrm{Prob} \left( \epsilon(x) < c\,\eta(m)| \epsilon'(x)>  \epsilon(x)\right)\leq 1$ and $\mathrm{Prob}(\epsilon'(x)> \epsilon(x))\leq \phi$, the following lower bound is obtained
    \begin{equation}
         \mathrm{Prob} \left( \epsilon(x) < c\,\eta(m)\right)-\phi
    \end{equation}

    \end{proof}

    Now, we are ready to start with the proof of Proposition \ref{prop:suff_inf}.
    
    \begin{proof}
        For simplicity in the notation and without loss of generality, we assume that $|\{x\in \mathcal{X}: b(x)\leq m\}|=2^m$. Note that if all binary sequences represent different values in the domain of the function, then $|\{x\in \mathcal{X}: b(x)\leq m\}|=2^{m+1}-1$.
        
        If $f\in \mathcal{L}_C^{\eta(m)}$ ($\mathcal{L}_Q^{\eta(m)}$), then there exists a classical (quantum) learning procedure, a sequence of distributions $\mathcal{D}$, an a polynomial $p\left(m\right)$ such that taking $n=p\left(m\right)$,
        \begin{equation}
            \mathrm{Prob}_{\mathcal{T}_n(D_m)}(|\hat{f}(x)-f(x)|\leq c\, \eta(m)) \geq 1-\delta
        \end{equation}
        for some constants $c\in \mathbb{R}^+$ and $\delta\in [0,\frac{1}{2})$, and for any $x\in \{x\in \mathcal{X}:b(x)\leq m\}$. Furthermore, if the function satisfies the classical (quantum) invariance property, then there exists a cover $\{R_j\}_{j=1}^{r(m)}$ fulfilling the conditions introduced in Definition \ref{def:invariance_property}.
        Note that $N_{R_j}(\mathcal{T}_n(D_m))\leq p\left(m\right)$ for any realization of the training set. Therefore, if we sample from a sequence of distributions $\mathcal{D}'$ such that
        \begin{equation}\label{eq:prop_P}
            \mathrm{Prob}_{\mathcal{T}_{n'}(D'_m)}\left(\bigcap_{j=1}^{r(m)} N_{R_j}(\mathcal{T}_{n'} (D'_m) )\geq p\left(m\right)\right)\geq 1-\phi
        \end{equation}
        where $n'=h(m,\frac{1}{\phi})$, and $h(a,b)$ is a multivariate polynomial, then 
        \begin{equation}
            \mathrm{Prob}_{\mathcal{T}_n(D_m),\mathcal{T}_{n'}(D_m')} \left(\bigcap_{j=1}^{r\left(m \right)} N_{R_j}(\mathcal{T}_n(D_m))\leq N_{R_j}(\mathcal{T}_{n'}(D_m'))\right)\geq 1-\phi
        \end{equation}
        Using Lemma \ref{lemma:increase_prob} and the classical (quantum) invariance property, 
        \begin{equation}
            \mathrm{Prob}_{\mathcal{T}_{n'}(D_m')}\left(|\hat{f}'(x)-f(x)|\leq c\, \eta(m)\right) \geq 1-\delta-\phi
        \end{equation}
        Taking for example $\phi:=\frac{1}{2}\left(\frac{1}{2}-\delta\right)$, then
        \begin{equation}
            \mathrm{Prob}_{\mathcal{T}_{n'}(D_m')}(|\hat{f}'(x)-f(x)|\leq c\, \eta(m)) \geq 1-\delta'
        \end{equation}
        where $\delta'= \frac{\delta}{2}+\frac{1}{4}<\frac{1}{2}$, and $n'=\tilde{h}(m)=h(m,\left(\frac{1}{2}\left(\frac{1}{2}-\delta\right)\right)^{-1})$. Therefore, $f(x)$ is classical (quantum) $\eta$-learnable using $\mathcal{D}'$. 
        
        On the other hand, note that any distribution $D_m'$ that satisfies \eqref{eq:prop_suff_inf} also fulfills
        \begin{align}
            \mathrm{Prob}_{X_{n+1}\sim D_m',\mathcal{T}_n (D_m')}&\left(x_{n+1}\in R_j\text{ and }(x_{n+1},f(x_{n+1}))\notin \mathcal{T}_n (D_m') \right)\nonumber\\& \geq \frac{1}{2^m q(m)} \left(|R_j|-N_{R_j}(\mathcal{T}_n (D_m'))\right)\geq \frac{\frac{2^m}{r\left(m\right)}-p\left(m\right)}{2^m q(m)} \nonumber\\& = \frac{1}{q(m)r\left(m\right)}-\frac{p(m)}{2^m q(m)}:= \frac{1}{z(m)}
        \end{align}
        where the first inequality follows from \eqref{eq:prop_suff_inf}. Hence, the random walk $W_{i+1}=W_i+V$, where $V\sim \mathrm{Bern}\left(1/z(m)\right)$, grows slower than
        \begin{equation}
            N_{R_j}(\mathcal{T}_{i+1}(D_m))=N_{R_j}(\mathcal{T}_{i}(D_m))+\mathbbm{1}\left\{ x_{i+1}\in R_j \text{ and }(x_{i+1},f(x_{i+1}))\notin \mathcal{T}_{i}(D_m)\right\}
        \end{equation}
        This means that the expected number of steps that $W_i$ needs to reach $p(m)$ is larger than the expected number of steps for $N_{R_j}(\mathcal{T}_{i}(D_m))$. We denote the latter as $\mathbb{E}[\tilde{n}]$, which satisfies
        \begin{equation}
            \mathbb{E}[\tilde{n}]\leq  \frac{p(m)-1}{1/z\left(m\right)}= z\left(m\right) \left(p\left(m\right)-1\right)
        \end{equation}
        where the right hand side is the expected number of steps for the random walk $W_i$ to reach $p(m)$. Therefore, using Markov's inequality
        \begin{equation}
                    \frac{1}{K}\geq \mathrm{Prob} \left( \tilde{n}\geq K \mathbb{E}[\tilde{n}]\right)\geq \mathrm{Prob} \left( \tilde{n}\geq K z\left(m\right) \left(p\left(m\right)-1\right)+1\right)
        \end{equation}
        This implies that if we take $n'=K z\left(m\right) \left(p\left(m\right)-1\right)$ samples from the distribution $D'_m$, then with probability at least $1-\frac{1}{K}$ the number of distinct elements in $R_j$ is higher or equal to $p(m)$, or equivalently,  
        \begin{equation}
            1-\frac{1}{K}\leq \mathrm{Prob} \left( \tilde{n}< n'\right) \implies \mathrm{Prob}_{\mathcal{T}_{n'}(D_m')} \left( N_{R_j}(\mathcal{T}_{n'}(D_m'))< p\left(m\right)\right)\leq \frac{1}{K} 
        \end{equation}
        Using the previous result, the probability that some inequality does not hold can be upper bounded as  
        \begin{align}
            \mathrm{Prob} \left(\bigcup_{j=1}^{r\left(m\right)} N_{R_j}(\mathcal{T}_{n'}(D_m'))<p\left(m\right)\right)&\leq \sum_{j=1}^{r\left(m\right)} \mathrm{Prob} \left( N_{R_j}(\mathcal{T}_{n'}(D_m')))<p\left(m\right)\right)\nonumber \\ & \leq \frac{r\left(m\right)}{K}
        \end{align}
        Taking $K=K' r\left(m\right)$, we have that 
        \begin{equation}
            \mathrm{Prob} \left(\bigcup_{j=1}^{r\left(m\right)} N_{R_j}(\mathcal{T}_{n'}(D_m'))<p\left(m\right)\right)\leq \frac{1}{K'}
        \end{equation}
        Note that $\frac{1}{K'}=\phi$, which implies that taking $n'=\frac{2}{\frac{1}{2}-\delta}r\left(m\right)z(m)(p\left(m\right)-1)$, then
        \begin{equation}
            \mathrm{Prob}_{\mathcal{T}_{n'}(D_m')}(|\hat{f}(x)-f(x)|\leq c \eta(m)) \geq 1-\delta'
        \end{equation}
        where $\delta'\in[0,\frac{1}{2})$. Finally, for a sufficiently large $m$, the number of elements of the training set can be upper bounded as
        \begin{equation}
            n'=\frac{2r(m)^2}{\frac{1}{2}-\delta} \frac{p(m)q(m)}{1-\frac{p\left(m\right)}{|R_j|}}\leq \frac{4\, r\left(m\right)^2}{\frac{1}{2}-\delta} p\left(m \right)q(m)
        \end{equation}
        which is a polynomial in $m$. The inequality follows since $p(m)<\left(\frac{1}{2}\right)\frac{2^m}{r\left(m\right)}$ for a sufficiently large $m$.

    \end{proof}

 \section{Proof of Theorem 3}\label{proof_theorem_3}

    This section shows the proof of Theorem \ref{thm:non_distribution}. That is, we prove the inclusions \eqref{classical_invariance_result} and \eqref{quantum_invariance_result}. For the expression \eqref{classical_invariance_result}, we start by using Theorem \ref{thm:conditions_speed_up}, 
    \begin{align}
        \left(\mathcal{P}_Q^{\,\eta}\backslash \mathcal{P}_C^{\,\eta}\right)\cap \mathcal{G}_C(\mathcal{D})&=\left(\mathcal{L}_Q^{\,\eta}(\mathcal{D})\backslash \mathcal{L}_C^{\,\eta}(\mathcal{D})\right)\cap \mathcal{G}_C(\mathcal{D})\nonumber \\& \subseteq \left(\mathcal{L}_Q^{\,\eta}\backslash \mathcal{L}_C^{\,\eta}(\mathcal{D})\right)\cap \mathcal{G}_C(\mathcal{D})
    \end{align}
    where the inclusion follows from $\mathcal{L}_Q^{\,\eta}(\mathcal{D})\subseteq \mathcal{L}_Q^{\,\eta}$. Therefore,
    \begin{equation}
        \left(\mathcal{P}_Q^{\,\eta}\backslash \mathcal{P}_C^{\,\eta}\right)\cap \mathcal{G}_C(\mathcal{D})\cap \mathcal{I}_C\subseteq \left(\mathcal{L}_Q^{\,\eta}\backslash \mathcal{L}_C^{\,\eta}(\mathcal{D})\right)\cap \mathcal{G}_C(\mathcal{D})\cap \mathcal{I}_C
    \end{equation}
    where $\mathcal{I}_C$ denotes the set of classical invariant functions.
    Next, by considering a sequence $\mathcal{D}$ that satisfies \eqref{eq:prop_suff_inf}, and using Proposition \ref{prop:suff_inf},
    \begin{align}
        \left(\mathcal{P}_Q^{\,\eta}\backslash \mathcal{P}_C^{\,\eta}\right)\cap \mathcal{G}_C(\mathcal{D})\cap \mathcal{I}_C &\subseteq \left(\mathcal{L}_Q^{\,\eta}\backslash \mathcal{L}_C^{\,\eta}(\mathcal{D})\right)\cap \mathcal{G}_C(\mathcal{D})\cap \mathcal{I}_C \nonumber \\ & \subseteq \left(\mathcal{L}_Q^{\,\eta}\backslash \mathcal{L}_C^{\,\eta}\right)\cap \mathcal{G}_C(\mathcal{D})\cap \mathcal{I}_C 
    \end{align}
    Therefore, by taking the union over all sequences that satisfy \eqref{eq:prop_suff_inf}, we obtain 
    \begin{equation}
        \bigcup_{\mathcal{D}}\bigg(\left(\mathcal{P}_Q^{\,\eta}\backslash \mathcal{P}_C^{\,\eta}\right)\cap \mathcal{G}_C(\mathcal{D})\cap \mathcal{I}_C \bigg )\subseteq \bigcup_{\mathcal{D}}\bigg(\left(\mathcal{L}_Q^{\,\eta}\backslash \mathcal{L}_C^{\,\eta}\right)\cap \mathcal{G}_C(\mathcal{D})\cap \mathcal{I}_C \bigg)
    \end{equation}
   Next, using that $(A\cap B)\cup (A \cap C)= A\cap (B\cup C)$,
    \begin{equation}
        \left(\mathcal{P}_Q^{\,\eta}\backslash \mathcal{P}_C^{\,\eta}\right)\cap \mathcal{G}_C\cap \mathcal{I}_C\subseteq \left(\mathcal{L}_Q^{\,\eta}\backslash \mathcal{L}_C^{\,\eta}\right)\cap \mathcal{G}_C\cap \mathcal{I}_C
    \end{equation}
    where $\mathcal{G}_C=\bigcup_{\mathcal{D}} \mathcal{G}_C(\mathcal{D}) $, i.e., the set of functions that are classical generable for some sequence $\mathcal{D}$ that satisfies \eqref{eq:prop_suff_inf}.

    Now, for the inclusion \eqref{quantum_invariance_result}, note that if $f \in \left(\mathcal{L}_Q^{\,\eta}\backslash \mathcal{L}_C^{\,\eta}\right)\cap \mathcal{G}_C\cap \mathcal{I}_Q$, then from Proposition \ref{prop:suff_inf} it follows that there exists a sequence $\mathcal{D}$ that satisfies \eqref{eq:prop_suff_inf}, such that $f\in \left(\mathcal{L}_Q^{\,\eta}(\mathcal{D})\backslash \mathcal{L}_C^{\,\eta}\right)\cap \mathcal{G}_C(\mathcal{D})\cap \mathcal{I}_Q$. Therefore, $f\in \left(\mathcal{L}_Q^{\,\eta}(\mathcal{D})\backslash \mathcal{L}_C^{\,\eta}(\mathcal{D})\right)\cap \mathcal{G}_C(\mathcal{D})\cap \mathcal{I}_Q$ for some sequence $\mathcal{D}$ that satisfies \eqref{eq:prop_suff_inf}. Next, using Theorem \ref{thm:conditions_speed_up}, $f\in \left(\mathcal{P}_Q^{\,\eta}\backslash \mathcal{P}_C^{\,\eta}\right)\cap \mathcal{G}_C(\mathcal{D})\cap \mathcal{I}_Q$. Finally, using the definition of $\mathcal{G}_C$, $f\in \left(\mathcal{P}_Q^{\,\eta}\backslash \mathcal{P}_C^{\,\eta}\right)\cap \mathcal{G}_C\cap \mathcal{I}_Q$. Consequently, 
    \begin{equation}
         \left(\mathcal{L}_Q^{\,\eta}\backslash \mathcal{L}_C^{\,\eta}\right)\cap \mathcal{G}_C\cap \mathcal{I}_Q \subseteq \left(\mathcal{P}_Q^{\,\eta}\backslash \mathcal{P}_C^{\,\eta}\right)\cap \mathcal{G}_C\cap \mathcal{I}_Q
    \end{equation}

\newpage

    \section{Proof of Theorem \ref{thm:examples}}\label{sec:proof_details_prime}

    \setcounter{equation}{0}
    
    This section shows some details of the proof of Theorem \ref{thm:examples}. The section is divided into two parts. First, the reduction between computing approximately the examples and factoring integers of the form $x=p_1^{r_1}\cdot p_2^{r_2}$ is presented. Next, we prove that there exists an efficient classical algorithm that samples from a distribution that satisfies the condition introduced in Proposition \ref{prop:suff_inf}.

    \subsection{Reduction to the prime factorization problem}\label{Reduction algorithms}

    The reduction argument consists in showing that if function $f_1(x)= \sum_{i=1}^{\omega(x)} p_i$, or both 
    \begin{equation}\label{eq:function_speed_up2}
    \begin{matrix}f_2(x)= \prod_{i=1}^{\omega(x)} p_i \text{\,\, and} & f_3(x)=\left[\sqrt{p_1 \sqrt{p_2\cdots \sqrt{p_{\omega(x)}}}}\right]^4 
     \\
    \end{matrix}
    \end{equation}
defined over the domain $\mathcal{X}=\{x\in \mathbb{N}:\omega(x)\leq K\}$, belong to $\mathcal{P}_C^{\,\eta}$ for $\eta(m)=m^u$, then we can factorize any integer of the form $x=p_1^{r_1}\cdot p_2^{r_2}$ in polynomial time with a classical algorithm.

First, if the previous functions belong to $\mathcal{P}_C^{\,\eta}$, then there exists an efficient algorithm that estimates $f_l(x)$ with an error $O(\log(x)^u)$ with at least probability $1/2$. Furthermore, as the functions $f_1(x)$ and $f_2(x)$ satisfy that $f_l(x)\in \mathbb{N}$, and $f_3(x)\in \mathbb{N}$ if $x=p_1^{r_1}\cdot p_2^{r_2}$, then $f_l(x)\in \mathbb{N}\cap [\hat{f}_l(x)-c \log(x)^u,\hat{f}_l(x)+c \log(x)^u]$ for some constant $c$ with at least probability $1/2$. Note that the number of integers in that interval is given by $\left \lceil 2 c \log(x)^u\right \rceil$, i.e., there are only a polynomial number of potential candidates. Therefore, if we can check the existence of a solution to the system of equations,
    \begin{equation}
        \left\{\begin{matrix}
        \tilde{f}=f_i(p_1 \cdot p_2) \\
        x=p_1^{r_1}\cdot p_2^{r_2}
        \end{matrix}\right.
    \end{equation}
where $p_1,p_2\in\mathbb{P}$, and $r_1,r_2\in \mathbb{N}$, then $f_l(x)$ can be computed in polynomial time with at least probability 1/2. Furthermore, in the case that $\tilde{f}=f_l(x)$, during the verification process the prime factors $p_1$ and $p_2$ are obtained. The probability can be done arbitrarily close to 1 by reaping this procedure a constant number of times. Note also that for each candidate $\tilde{f}$ we need to check at most $\log(x)^2$ different values of $r_1$ and $r_2$, since $r_1,r_2 \leq \log(x)$.  

To check if the system of equations has a solution for function $f_1(x)$, first, we compute $\bar{r}=\mathrm{gcd}(r_1,r_2)$, which is used to simplify the algorithm. Next, combining $f_1(x)=p_1+p_2$ and $x^{1/\bar{r}}=p_1^{r_1/\bar{r}} \cdot p_2^{r_2/\bar{r}}$, we have that $p_1$ is a root of
\begin{equation}
    p(u)=u^{\frac{r_1}{\bar{r}}} (f_1-u)^{\frac{r_2}{\bar{r}}}-x^{\frac{1}{\bar{r}}}.
\end{equation}
As the polynomial $q(u)=u^{\frac{r_1}{\bar{r}}} (f_1-u)^{\frac{r_2}{\bar{r}}}$ has a parabolic shape, there are at most two real roots of $p(u)$ in the interval $(0,f_1)$. To find them, we use a binary search method in the intervals $I_-=(0,  u_{\max} )$ and $I_+=(u_{\max} , f_1)$, where $u_{\max}=r_1 f_1/(r_1+r_2)$. Note that in each iteration the length of the intervals is divided by two. Therefore, $\log f_1\leq \log 2x$ iterations are sufficient to reduce the length of the intervals to less than 1. At this point, we can compute both integers that belong to the resulting intervals and check if one of them is a prime factor of $x$. Therefore, if a prime factor of $x$ is obtained, then $\tilde{f}=f_1(x)$, otherwise $\tilde{f}\neq f_1(x)$. Consequently, if $f_1\in \mathcal{P}_C^{\,\eta}$, we can factorize numbers of the form $p_1^{r_1}\cdot p_2^{r_2}$ in polynomial time with high probability. The pseudo-code of the resulting algorithm is shown in Algorithm \ref{alg_f1}.

To verify if $\tilde{f}=f_2(x)$, the fact that $f_2(x)|x$ is used. Moreover, after dividing $x$ by $f_2(x)$ as many times as possible, we obtain $z=p^{|r_1-r_2|}$, where $p\in \{p_1,p_2\}$. Note that numbers with only one prime factor can be factorized efficiently. Hence, if $\tilde{f}=f_2(x)$, then $z=1$ or a prime factor of $x$ is obtained. Therefore, if $f_2\in \mathcal{P}_C^{\,\eta}$, then we can factorize any number of the form $p_1^{r_1}\cdot p_2^{r_2}$ such that $r_1\neq r_2$.

Next, to check if $\tilde{f}=f_3(x)$, we use that if $r_1<2r_2$, then $x^2/f_3^{r_1}=p_2^{2r_2-r_1}\in \mathbb{N}\backslash \{0,1\}$, which can be factorized efficiently. Similarly, if $r_1>2r_2$ then $x/f_3^{r_2}=p_1^{r_1-2r_2}\in \mathbb{N}\backslash \{0,1\}$. Therefore, if $f_3(x)\in \mathcal{P}_C^{\,\eta}$, then there exists an algorithm to factorize numbers of the form $p_1^{r_1} \cdot p_2^{r_2}$ such that $r_1\neq 2r_2$ in polynomial time.    

Finally, if both $f_2(x)$ and $f_3(x)$ belong to $\mathcal{P}_C^{\,\eta}$, then we can factorize any number of the form $x=p_1^{r_1} \cdot p_2^{r_2}$ in polynomial time with high probability. The pseudo-code is shown in Algorithm \ref{alg2_f2_f3}.

    \begin{algorithm}[H]
     \caption{Factorization using an estimate of function $f_1(x)$}
    \label{alg_f1}
    \begin{algorithmic}
    	\State\textbf{Input:} $x=p_1^{r_1}\cdot p_2^{r_2}$
    	\State\textbf{Output:} $p_1$
    	\State Compute estimate $\hat{f}_1(x)$
    	\State cont=0, check=False
    	\State $\tilde{f}=f_1(x)$

    	\While{cont$<\left \lceil 2c (\log x)^u \right \rceil$ and check=False} 
    	    \For{$r_1'=1:\log x$}
    	        \For{$r_2'=1:\log x$}
    	        
    	            \State $u_{\max}=\frac{r_1' \tilde{f}}{r_1'+r_2'}$
    	            
    	            \State Using binary search, compute the nearest integer to the roots of $P(u)$ in 
    	            \State the intervals $[0,u_{\max}]$ and $[u_{\max}, \tilde{f}]$.
    	            
    	            \State Assign the values to variables $h_-$ and $h_+$, respectively.
            	    \If{$h_-\in \mathbb{P}$ and $h_-|x$}
            	        \State check=True
            	        \State $p_1=h_-$
            	   \EndIf
            	   
                   \If{$h_+\in \mathbb{P}$ and $h_+|x$}
            	        \State check=True
            	        \State $p_1=h_+$
            	   \EndIf
            	   
            	\EndFor
        	\EndFor
        	\State cont=cont+1
        	\State $\tilde{f}=\tilde{f}+(-1)^{\mathrm{cont}}\mathrm{cont}$
        	
    	\EndWhile
	\end{algorithmic}
\end{algorithm}

    \begin{algorithm}[H]
     \caption{Factorization using an estimate of functions $f_2(x)$ and $f_3(x)$}
    \label{alg2_f2_f3}
    \begin{algorithmic}
    	\State\textbf{Input:} $x=p_1^{r_1}\cdot p_2^{r_2}$
    	\State\textbf{Output:} $p\in\{p_1, p_2\}$
    	\State Compute estimate $\hat{f}_2(x)$
    	\State cont=0, check=False
    	\State $\tilde{f}=f_2(x)$

    	\While{cont$<\left \lceil 2c (\log x)^u \right \rceil$ and check=False} 
    	    \State $z=x$
    	    
    	    \While{ $\tilde{f}$ divides $z$}
    	        \State $z=z/\tilde{f}$
     	    \EndWhile 
     	    
     	    \If{$z=1$}
     	        \State check=True
     	    \Else
     	        \If{$z=\hat{p}^r$ such that $\hat{p}\in \mathbb{P}$, and $\hat{p}|x$}
     	            \State check=True
     	            \State $p=\hat{p}$
     	        \EndIf
     	    \EndIf
    	    
        	\State cont=cont+1
        	\State $\tilde{f}=\tilde{f}+(-1)^{\mathrm{cont}}\mathrm{cont}$
        	
    	\EndWhile
    	
    	\If{$z=1$}
    	    \State Compute estimate $\hat{f}_3(x)$
    	    \State cont=0, check=False
    	    \State $\tilde{f}=f_3(x)$
    	    
    	    \While{cont$<\left \lceil 2c (\log x)^u \right \rceil$ and check=False} 
                \For{$r_1'=1:\log(x)$}
                    \For{$r_2'=1:\log(x)$}
                        \If{$x^2/\tilde{f}^{r_1'}=\hat{p}^r$ such that $\hat{p}\in\mathbb{P}$ and $\hat{p}|x$}
                            \State $p=\hat{p}$
                            \State check=True
                        \EndIf
                    \EndFor
                \EndFor
        	    
            	\State cont=cont+1
            	\State $\tilde{f}=\tilde{f}+(-1)^{\mathrm{cont}}\mathrm{cont}$
            	
        	\EndWhile
    	    
    	\EndIf
	\end{algorithmic}
\end{algorithm}

\subsection{Sampling Algorithm} \label{sampling_algorithm_section}

 Next, we show that there exists an efficient classical algorithm to sample from a sequence of distributions $\mathcal{D}$ that satisfies \eqref{eq:prop_suff_inf}. As mentioned in Section \ref{sec:realtion_learning_and_computational}, the algorithm is divided in three steps. 
 
 \subsubsection{Step 1: Number of prime factors}
 
 First, the number of prime factors, denoted by $\omega$, is randomly selected. Note that if we sample from the uniform distribution over $\{x:\omega(x)\leq K\text{ and }b(x)\leq m\}$, then the probability that $x$ has $\omega$ different prime factors satisfies
\begin{equation}
    \text{Prob}(\omega)\propto \pi_{\omega}(2^m)
\end{equation}
where $\omega\in[1:K]$, and $\pi_\omega(2^m)$ denotes the number of integers smaller than $2^m$ with exactly $\omega$ different prime factors. Since $\pi_{\omega}(2^m)$ cannot be efficiently computed, we use  the following result from Landau \cite{landau1909handbuch}

\begin{equation}
    \pi_{\omega}(u) \sim  \frac{u}{\log u} \frac{(\log\log u)^{\omega-1}}{(\omega-1)!}
\end{equation}

Hence, there exists a constant $m_0$ such that for all $m\geq m_0$

\begin{equation}\label{inequality_1}
    \frac{9}{11} \frac{\pi_\omega(2^m)}{\sum_{i=1}^K \pi_\omega(2^m)}\leq \frac{\frac{(\log m)^{\omega-1}}{(\omega-1)!}}{\sum_{i=1}^K \frac{(\log m)^{\omega-1}}{(\omega-1)!}}\leq \frac{11}{9} \frac{\pi_\omega(2^m)}{\sum_{i=1}^K \pi_\omega(2^m)}
\end{equation}

where the value $\frac{9}{11}$ is arbitrary, i.e., it can be made arbitrarily close to $1$ by increasing the constant $m_0$.

The previous equation implies that this approximation only deviates in a constant factor from the desired distribution. Furthermore, we can sample efficiently from this distribution. To do this, the expression $\frac{(\log m)^{\omega-1}}{(\omega-1)!}$ is computed for $\omega\in[1:K]$. Once the $K$ values are given, we can sample from the discrete distribution in time $O(K)$\footnote{Factors with a sub-polynomial growth in $m$ are omitted.} \cite{bringmann2017efficient}.

\subsubsection{Step 2: Prime factors}\label{sampling_algorithm_step_2}

After the number of different primes is determined, the set of $\omega$ prime factors is randomly selected. Similarly to the previous section, if we sample from the uniform distribution over $\{x:\omega(x)=\omega \text{ and }b(x)\leq m\}$, then the probability of observing  $p_1<p_2<\cdots<p_\omega$ is proportional to the number of integers which those $\omega$ prime factors, i.e.,
\begin{equation}\label{distribution_ideal_step_2}
    \mathrm{Prob}(p_1,\cdots,p_\omega)=\frac{|N_{p_1,\cdots,p_\omega}(2^m)|}{\pi_{\omega}(2^m)}
\end{equation}
where $N_{p_1,\cdots,p_\omega}(2^m)=\{x\in \mathbb{N}:\omega(x)=\omega,\, x\leq 2^m,\text{ and } x=\prod_{i=1}^\omega p_i^{r_i}\}$. In order to sample efficiently from a similar distribution, we need to make several approximations. In particular, we start by using the following lemma.

\begin{lemma}\label{lem:number_volume}
    The number of integers smaller than $2^m$, whose prime factors are $p_1,\cdots,p_\omega$, i.e., $|N_{p_1,\cdots,p_\omega}(2^m)|$, satisfies,
    \begin{equation}
       | N_{p_1,\cdots,p_\omega}(2^m)|\sim \frac{m^\omega}{\omega!}\prod_{i=1}^{\omega}\frac{1}{\log p_i}
    \end{equation}
\end{lemma}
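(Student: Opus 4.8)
The plan is to reduce this counting problem to a lattice-point count inside a simplex and then approximate that count by the simplex's Euclidean volume. First I would observe that every $x\in N_{p_1,\ldots,p_\omega}(2^m)$ has the form $x=\prod_{i=1}^{\omega} p_i^{r_i}$ with each multiplicity $r_i\ge 1$ --- since the set of distinct prime factors must be exactly $\{p_1,\ldots,p_\omega\}$ --- and is determined by its exponent vector $(r_1,\ldots,r_\omega)$. Taking base-$2$ logarithms, the constraint $x\le 2^m$ reads $\sum_{i=1}^{\omega} r_i\log_2 p_i\le m$, so
\begin{equation*}
|N_{p_1,\ldots,p_\omega}(2^m)|=\#\Big\{(r_1,\ldots,r_\omega)\in\mathbb{Z}_{\ge 1}^{\omega}:\ \sum_{i=1}^{\omega} r_i\log_2 p_i\le m\Big\}.
\end{equation*}
This is precisely the number of integer points of the simplex $R_m=\{t\in\mathbb{R}^{\omega}:t_i\ge 1,\ \sum_i t_i\log_2 p_i\le m\}$; here $\log$ is taken to be $\log_2$, as the stated formula requires for dimensional consistency.

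Next I would compute the volume of $R_m$. The substitution $u_i=t_i-1$ carries $R_m$ to the corner simplex $\{u\ge 0:\ \sum_i u_i\log_2 p_i\le m-\sum_i\log_2 p_i\}$, whose volume is $\dfrac{(m-\sum_i\log_2 p_i)^{\omega}}{\omega!\,\prod_i\log_2 p_i}$. For fixed primes $\sum_i\log_2 p_i$ is a constant, so this is asymptotic to $\dfrac{m^{\omega}}{\omega!\,\prod_i\log_2 p_i}$ as $m\to\infty$, matching the claimed leading term.

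It then remains to show that the lattice-point count agrees asymptotically with this volume. I would argue by induction on $\omega$. Writing $V_\omega(t)$ for the number of integer points of $\{r\in\mathbb{Z}_{\ge 1}^{\omega}:\sum_i r_i\log_2 p_i\le t\}$ and peeling off the last coordinate gives
\begin{equation*}
V_\omega(t)=\sum_{k\ge 1}V_{\omega-1}\big(t-k\log_2 p_\omega\big),
\end{equation*}
with base case $V_1(t)=\lfloor t/\log_2 p_1\rfloor\sim t/\log_2 p_1$. Inserting the inductive estimate $V_{\omega-1}(s)\sim \frac{s^{\omega-1}}{(\omega-1)!\,\prod_{i<\omega}\log_2 p_i}$ and comparing the sum over $k$ with the integral
\begin{equation*}
\int_{0}^{m/\log_2 p_\omega}\frac{(m-u\log_2 p_\omega)^{\omega-1}}{(\omega-1)!\,\prod_{i<\omega}\log_2 p_i}\,du=\frac{m^{\omega}}{\omega!\,\prod_{i}\log_2 p_i}
\end{equation*}
closes the induction. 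Equivalently, one may invoke the standard fact that the number of lattice points in a fixed convex body dilated to linear scale $\Theta(m)$ equals its volume up to a boundary error of order $O(m^{\omega-1})$.

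The step I expect to be the main obstacle is controlling these error terms uniformly. For fixed $\omega$ and fixed primes both the sum-to-integral (Euler--Maclaurin) discrepancy and the lattice-versus-volume boundary term are $O(m^{\omega-1})$, genuinely of lower order than the main term $\Theta(m^{\omega})$, so the asymptotic equivalence is rigorous. The delicate regime is when the primes grow with $m$: if some $\log_2 p_i$ is comparable to $m$ then $R_m$ is thin in that direction, only boundedly many values of $r_i$ are admissible, and the sum-to-integral replacement breaks down. I would therefore either state the lemma in the regime $\sum_i\log_2 p_i=o(m)$ (equivalently $\prod_i p_i=2^{o(m)}$), where the above yields explicit uniform bounds, or retain the exact boundary corrections and argue that the resulting approximation is accurate enough for the sampling application, in which only agreement up to polynomial factors in $m$ is ultimately required.
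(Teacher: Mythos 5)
Your proposal is correct and follows essentially the same route as the paper: both reduce the count to lattice points of the exponent-vector simplex $\{r:\sum_i r_i\log p_i\leq m\}$ and approximate that count by the simplex volume $\frac{m^\omega}{\omega!}\prod_i\frac{1}{\log p_i}$, with your shift $u_i=t_i-1$ and sum-to-integral induction serving as a more explicit version of the paper's boundary-term sandwich $|S\cap\mathbb{Z}^\omega|\geq|N_{p_1,\cdots,p_\omega}(2^m)|\geq|S\cap\mathbb{Z}^\omega|-|\delta S\cap\mathbb{Z}^\omega|$. Your closing caveat about uniformity when $\log p_i$ grows comparably to $m$ is a fair observation (and one the paper's own one-line asymptotic argument glosses over as well), but it does not affect the lemma as stated, which is an asymptotic in $m$ for a fixed prime tuple.
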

\begin{proof}
    Note that any $x \in N_{p_1,\cdots,p_\omega}(2^m)$ fulfills
\begin{equation}
    1< \prod_{i=1}^\omega p_i \leq x = \prod_{i=1}^\omega p_i^{r_i}\leq 2^m
\end{equation}
Hence, $0<r_1\log p_1+\cdots+r_\omega\log p_\omega\leq m$, or equivalently, $r=[r_1,r_2,\cdots, r_\omega]^T$ belongs to a simplex, which we denote by $S_{p_1,\cdots,p_\omega}$, and whose vertices are 
\begin{equation}
    v_0=[0,0,\cdots,0]^T, v_1=\left[\frac{m}{\log p_1},0,\cdots,0\right]^T,\cdots, v_\omega=\left[0,\cdots, 0,\frac{m}{\log p_\omega}\right]^T
\end{equation}
$S_{p_1,\cdots,p_\omega}$ is related with $N_{p_1,\cdots,p_\omega}(2^m)$ as shown below, 
\begin{equation}
    |S_{p_1,\cdots,p_\omega}\cap\mathbb{Z}^\omega|\geq |N_{p_1,\cdots,p_\omega}(2^m)|\geq |S_{p_1,\cdots,p_\omega}\cap\mathbb{Z}^\omega|-|\delta S_{p_1,\cdots,p_\omega}\cap\mathbb{Z}^\omega|
\end{equation}
where $\delta S_{p_1,\cdots,p_\omega}$ denotes the surface of the simplex. Moreover, $|S_{p_1,\cdots,p_\omega}\cap \,\mathbb{Z}^\omega|\sim\text{Vol}(S_{p_1,\cdots,p_\omega})$, and $|\delta S_{p_1,\cdots,p_\omega}\cap \,\mathbb{Z}^\omega|\sim\text{Vol}(\delta S_{p_1,\cdots,p_\omega})$, and since the volume of the surface grows slower than the volume of the simplex, 
\begin{equation}
    |N_{p_1,\cdots,p_\omega}(2^m)|\sim\text{Vol}(S_{p_1,\cdots,p_\omega})=\frac{m^\omega}{\omega!}\prod_{i=1}^{\omega}\frac{1}{\log p_i}
\end{equation}
\end{proof}
Using Lemma \ref{lem:number_volume}, for a sufficiently large $m$,
\begin{equation}\label{inequality_2}
    \frac{9}{11} \frac{|N_{p_1,\cdots,p_{\omega}}(2^m)|}{\pi_{\omega}(2^m)}\leq \frac{\prod_{i=1}^{\omega}\frac{1}{\log p_i}}{\sum_{p_1<\cdots<p_{\omega}}\prod_{i=1}^{\omega}\frac{1}{\log p_i}} \leq \frac{11}{9} \frac{|N_{p_1,\cdots,p_{\omega}}(2^m)|}{\pi_{\omega}(2^m)}
\end{equation}
which implies that this approximation only deviates in a constant factor from distribution \eqref{distribution_ideal_step_2}\footnote{ Interestingly, since $\mathbbm{1}\{p_1\cdots p_\omega\leq 2^m\}\leq N_{p_1,\cdots,p_\omega}(2^m)\leq \mathbbm{1}\{p_1\cdots p_\omega\leq 2^m\}\cdot m^K$,
\begin{equation}\label{inequality_3_v2}
    \frac{1}{m^K} \frac{|N_{p_1,\cdots,p_{\omega}}(2^m)|}{\pi_{\omega}(2^m)}\leq \frac{\mathbbm{1}\{p_1\cdots p_\omega\leq 2^m\}}{\sum_{p_1<\cdots<p_{\omega}}\mathbbm{1}\{p_1\cdots p_\omega\leq 2^m\}} \leq m^K \frac{|N_{p_1,\cdots,p_{\omega}}(2^m)|}{\pi_{\omega}(2^m)}
\end{equation}
Therefore, this distribution deviates in a polynomial factor from $|N_{p_1,\cdots,p_{\omega}}(2^m)|/\pi_{\omega}(2^m)$.}. Next, we use Lemma \ref{lemma_ceiling}.
\begin{lemma}\label{lemma_ceiling}
    The following identity holds
    \begin{equation}
      \frac{1}{2^K} \prod_{i=1}^{\omega}\frac{1}{\log p_i} \leq  \prod_{i=1}^{\omega}\frac{1}{\left \lceil \log p_i \right \rceil}\leq \prod_{i=1}^{\omega}\frac{1}{\log p_i}
    \end{equation}
\end{lemma}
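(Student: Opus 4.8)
The plan is to treat the two inequalities separately, since the right-hand (upper) bound is immediate while the left-hand (lower) bound is the only part that requires an estimate. Throughout I take the logarithm to base $2$, consistent with the constraint $\prod_{i=1}^\omega p_i^{r_i}\leq 2^m$ used in Lemma~\ref{lem:number_volume}, so that $\log p_i = \log_2 p_i$.

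For the upper bound I would simply note that $\lceil \log p_i\rceil \geq \log p_i$ for every $i$, whence $\tfrac{1}{\lceil \log p_i\rceil}\leq \tfrac{1}{\log p_i}$. Since all factors are positive, multiplying these over $i=1,\dots,\omega$ preserves the inequality and gives $\prod_{i=1}^\omega \tfrac{1}{\lceil \log p_i\rceil}\leq \prod_{i=1}^\omega \tfrac{1}{\log p_i}$, which is the right-hand inequality.

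For the lower bound I would first establish a per-factor estimate. Because each $p_i$ is prime we have $p_i\geq 2$, and therefore $\log p_i\geq \log 2 = 1$. For any real $t\geq 1$ one has $\lceil t\rceil \leq t+1\leq 2t$; applying this with $t=\log p_i$ yields $\lceil \log p_i\rceil \leq 2\log p_i$, equivalently $\tfrac{1}{\lceil \log p_i\rceil}\geq \tfrac{1}{2}\cdot\tfrac{1}{\log p_i}$. Taking the product over the $\omega$ factors then gives $\prod_{i=1}^\omega \tfrac{1}{\lceil \log p_i\rceil}\geq 2^{-\omega}\prod_{i=1}^\omega \tfrac{1}{\log p_i}$. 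Finally, since the domain is $\{x:\omega(x)\leq K\}$ we have $\omega\leq K$, so $2^{-\omega}\geq 2^{-K}$, and the left-hand inequality follows.

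The only real subtlety is the verification that $\log p_i\geq 1$, which is exactly what makes the elementary bound $\lceil t\rceil\leq 2t$ applicable: for $t<1$ the ceiling could inflate a factor by more than a factor of two, so the bound $p_i\geq 2$ is doing genuine work here. The remaining ingredient, $\omega\leq K$, is what converts the per-factor loss of at most $\tfrac12$ into the single uniform constant $2^{-K}$ appearing in the statement. No delicate estimate is needed beyond these two observations, so I expect the proof to be short.
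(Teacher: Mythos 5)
Your proof is correct and takes essentially the same route as the paper's: both arguments rest on $\lceil \log p_i \rceil \leq \log p_i + 1$ combined with $\log_2 p_i \geq 1$, giving a loss factor of at most $\tfrac{1}{2}$ per prime and hence $2^{-\omega}\geq 2^{-K}$ overall. Your packaging of this as $\lceil t\rceil \leq 2t$ for $t\geq 1$ is a minor streamlining that even avoids the paper's separate treatment of the case $p_i=2$.
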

\begin{proof}

    Note that
    \begin{equation}
    \log p \leq \left \lceil \log p \right \rceil \leq \log p +1
    \end{equation}
    is satisfied for any $p\in \mathbb{P}$, where $\mathbb{P}$ denotes the set of prime numbers. Therefore, 
        \begin{equation}\label{inequality_ceil}
             1- \frac{1}{\left \lceil \log p \right \rceil} \leq \frac{\log p }{\left \lceil \log p \right \rceil} \leq  1 
        \end{equation}
    On the other hand, as $\log 2/\left \lceil \log 2 \right \rceil=1$, 
    \begin{equation}
        \min_{p\in \mathbb{P}} \left\{\frac{\log p}{ \left \lceil \log p \right \rceil} \right\} = \min_{p\in \mathbb{P}\backslash \{2\}} \left\{\frac{\log p}{ \left \lceil \log p \right \rceil} \right\}
    \end{equation}
    Using inequality \eqref{inequality_ceil},
        \begin{equation}
            \min_{p\in \mathbb{P}\backslash \{2\}} \left\{\frac{\log p}{ \left \lceil \log p \right \rceil} \right\}\geq \min_{p\in \mathbb{P}\backslash \{2\}} \left\{ 1- \frac{1}{\left \lceil \log p \right \rceil} \right\}= \frac{1}{2}
        \end{equation}
    Therefore,
    \begin{equation}
        1 \geq \prod_{i=1}^\omega \frac{\log p_i }{\left \lceil \log p_i \right \rceil} \geq \frac{1}{2^\omega} \geq \frac{1}{2^K}
    \end{equation}
\end{proof}
From this lemma follows that, 
\begin{equation}\label{inequality_3}
    \frac{1}{2^K}\frac{\prod_{i=1}^{\omega}\frac{1}{\log p_i}}{\sum_{p_1<\cdots<p_{\omega}}\prod_{i=1}^{\omega}\frac{1}{\log p_i}} \leq\frac{\prod_{i=1}^{\omega} \frac{1}{\left \lceil \log p_i \right \rceil}}{\sum_{p_1<\cdots<p_{\omega}}\prod_{i=1}^{\omega}\frac{1}{\left \lceil \log p_i \right \rceil}}  \leq 2^K \frac{\prod_{i=1}^{\omega}\frac{1}{\log p_i}}{\sum_{p_1<\cdots<p_{\omega}}\prod_{i=1}^{\omega}\frac{1}{\log p_i}}
\end{equation}
or equivalently, distribution 
\begin{equation}
    \mathrm{Prob}(p_1,\cdots,p_\omega)\propto \frac{1}{\left \lceil \log p_1 \right \rceil \cdots  \left \lceil \log p_\omega \right \rceil}
\end{equation}
where $p_i \in \mathbb{P}$, $p_i< p_j$ for all $i<j$, and $\sum_{i=1}^\omega \log p_i \leq m$, deviates in a constant factor from the desired distribution. Next, we modify the last condition to $\sum_{i=1}^\omega \lceil\log p_i \rceil \leq m+\omega$, and reject any sample that does not hold $\sum_{i=1}^\omega \log p_i \leq m$. As shown later, the rejection probability is bounded by a constant. On the other hand, now we can use $l_i=\lceil\log p_i \rceil$ as the random variable. Specifically,
\begin{align}
    \mathrm{Prob}(l_1,\cdots,l_\omega)&\propto \prod_{j: v_l(j)\neq 0}\frac{1}{j^{v_l(j)}}\left(\frac{1}{v_l(j)!}\prod_{i=0}^{v_l(j)-1} \left(\pi(2^{j}-1)-\pi(2^{j-1}-1)-i\right) \right)
    \nonumber\\&=\prod_{j: v_l(j)\neq 0} \frac{1}{j^{v_l(j)}} \binom{\pi(2^{j}-1)-\pi(2^{j-1}-1)}{v_l(j)} 
\end{align}
where $v_l(j)=\sum_{i=1}^{\omega} \mathbbm{1}\{l_i=j\}$, and $\pi(u)$ denotes the prime-counting function. The previous expression only holds if $v_l(j)\leq \pi(2^{j}-1)-\pi(2^{j-1}-1)$ $\forall\, j\in[1:m]$, otherwise the probability is 0.

Note that the number of primes with $j$ bits, i.e.,
\begin{equation}
    \pi(2^{j}-1)-\pi(2^{j-1}-1)=\pi(2^{j})-\pi(2^{j-1}) \text{ for } j>2,
\end{equation}
cannot be computed efficiently. For this reason, we use $\frac{u}{\log u}\leq \pi(u)\leq 1.26\frac{u}{\log u}$ for $u>17$ \cite{rosser1962approximate}. Interestingly, from the prime number theorem, which states $\pi(u)\sim \frac{u}{\log u}$, it follows that the value $1.26$ can be done arbitrarily close to $1$ by increasing the lower bound on $u$. Using the previous inequalities, we have that
\begin{equation}
    2^j \left( \frac{0.37j-1}{j(j-1)}\right)\leq \pi(2^{j})-\pi(2^{j-1})\leq 2^j \left(\frac{0.76j-1.26}{j (j-1)} \right)
\end{equation}
for $j\geq 6$.  Therefore, 
\begin{equation}\label{app:inequality_binom}
    \binom{\left \lfloor 2^j \left( \frac{0.37j-1}{j(j-1)}\right)\right \rfloor}{v_l(j)}\leq \binom{\pi(2^{j})-\pi(2^{j-1})}{v_l(j)} \leq \binom{\left \lceil 2^j \left(\frac{0.76j-1.26}{j (j-1)} \right)\right \rceil}{v_l(j)}
\end{equation}
where $j\geq C_K:=\max\{6,c_K\}$, and $c_K$ is the smallest integer that satisfies 
\begin{equation}
    K \leq \left \lfloor 2^{c_k} \left( \frac{0.37c_K-1}{c_K(c_K-1)}\right)\right \rfloor
\end{equation}
This guarantees that all the binomial coefficients are well defined for any value of $v_l(j)\in[0,\omega]$. Moreover, using that $\left(\frac{a}{b}\right)^b\leq \binom{a}{b}\leq \left(\frac{a e}{b}\right)^b$, 
\begin{equation}
    1\geq \frac{\binom{\pi(2^{j})-\pi(2^{j-1})}{v_l(j)}}{\binom{\left \lceil 2^j \left(\frac{0.76j-1.26}{j (j-1)} \right)\right \rceil}{v_l(j)}}\geq \left(\frac{\left \lfloor 2^j \left( \frac{0.37j-1}{j(j-1)}\right)\right \rfloor}{\left \lceil 2^j \left(\frac{0.76j-1.26}{j (j-1)} \right) \right \rceil e }  \right)^{v_l(j)} \geq \Xi(K)^K
\end{equation}
where $\Xi(K)=\min_{j\geq C_K} \left(\frac{\left \lfloor 2^j \left( \frac{0.37j-1}{j(j-1)}\right)\right \rfloor}{\left \lceil 2^j \left(\frac{0.76j-1.26}{j (j-1)} \right) \right \rceil e }  \right)$. 
This implies that
\begin{equation*}
     \Xi(K)^{K^2}\frac{\prod_{j: v_l(j)\neq 0} \frac{1}{j^{v_l(j)}} \binom{\pi(2^{j}-1)-\pi(2^{j-1}-1)}{v_l(j)} }{\sum_{l_1,\cdots,l_{\omega}}\prod_{j: v_l(j)\neq 0}\frac{1}{j^{v_l(j)}} \binom{\pi(2^{j}-1)-\pi(2^{j-1}-1)}{v_l(j)} }
\end{equation*}
\begin{equation*}
    \leq \frac{\prod_{j: v_l(j)\neq 0}\frac{1}{j^{v_l(j)}} \left(\mathbbm{1}\{j< C_K\}\binom{\pi(2^{j}-1)-\pi(2^{j-1}-1)}{v_l(j)}+\mathbbm{1}\{j\geq C_K\}\binom{\left \lceil 2^j \left(\frac{0.76j-1.26}{j (j-1)} \right)\right \rceil}{v_l(j)}\right) }{\sum_{l_1,\cdots,l_{\omega}}\prod_{j: v_l(j)\neq 0} \frac{1}{j^{v_l(j)}} \left(\mathbbm{1}\{j< C_K\}\binom{\pi(2^{j}-1)-\pi(2^{j-1}-1)}{v_l(j)}+\mathbbm{1}\{j\geq C_K\}\binom{\left \lceil 2^j \left(\frac{0.76j-1.26}{j (j-1)} \right)\right \rceil}{v_l(j)}\right) } 
\end{equation*}
\begin{equation}\label{distribution_equation}
    \leq \left(\frac{1}{\Xi(K)^{K^2}}\right) \frac{\prod_{j: v_l(j)\neq 0} \frac{1}{j^{v_l(j)}} \binom{\pi(2^{j}-1)-\pi(2^{j-1}-1)}{v_l(j)} }{\sum_{l_1,\cdots,l_{\omega}}\prod_{j: v_l(j)\neq 0}\frac{1}{j^{v_l(j)}} \binom{\pi(2^{j}-1)-\pi(2^{j-1}-1)}{v_l(j)} }
\end{equation}
This last distribution is the one that we use to generate the prime factors. In particular, we first generate a vector $l=[l_1,\cdots,l_\omega]$ with a probability proportional to 
\begin{align*}
    \prod_{j: v_l(j)\neq 0}\frac{1}{j^{v_l(j)}} \left(\mathbbm{1}\{j< C_K\}\binom{\pi(2^{j}-1)-\pi(2^{j-1}-1)}{v_l(j)}+\mathbbm{1}\{j\geq C_K\}\binom{\left \lceil 2^j \left(\frac{0.76j-1.26}{j (j-1)} \right)\right \rceil}{v_l(j)}\right)
\end{align*}
To sample from this distribution, note that $l_i\leq m$, i.e., there are at most $m^\omega$ values. We compute the unnormalized probability for each sequence, and as shown previously, samples can be generated from this discrete distribution in time $O(m^\omega)$. Afterwards, for each value of $l_i$ we generate a uniformly random prime factor with $l_i$ bits, which can be done in time $\mathrm{poly}(l_i)$ \cite{joye2000efficient}. 

Finally, we need to check if $\sum_{i=1}^\omega \log p_i >m$, in which case the sequence is rejected, and the process is repeated. Now, we show that the rejection probability is bounded by a constant, i.e., a constant number of repetitions is sufficient to generate a sample with high probability.  
\begin{align}
    \mathrm{Prob}\left( \sum_{i=1}^\omega \log p_i >m\right)&\leq \mathrm{Prob} \left( m<\sum_{i=1}^\omega l_i \leq m+\omega \right)\nonumber\\&=\sum_{\varphi=m+1}^{m+\omega}\sum_{l\in S_{\varphi}} \mathrm{Prob}(l) \nonumber\\&= \sum_{\varphi=m+1}^{m+\omega}\sum_{l\in S_{\varphi}} \left ( \frac{\prod_{j: v_l(j)\neq 0} \frac{G_l(j)}{j^{v_l(j)}} }{\sum_{\varphi=1}^{m+\omega}\sum_{l\in S_{\varphi}}\prod_{j: v_l(j)\neq 0} \frac{G_l(j)}{j^{v_l(j)}} }\right)\nonumber\\
    &=1 -  \left ( \frac{\sum_{\varphi=1}^{m}\sum_{l\in S_{\varphi}}\prod_{j: v_l(j)\neq 0} \frac{G_l(j)}{j^{v_l(j)}}}{\sum_{\varphi=1}^{m+\omega}\sum_{l\in S_{\varphi}}\prod_{j: v_l(j)\neq 0} \frac{G_l(j)}{j^{v_l(j)}}}\right)
\end{align}
The first inequality uses that $\sum_{i=1}^\omega l_i\leq m\implies \sum_{i=1}^\omega \log p_i \leq m$, i.e., if $\sum_{i=1}^\omega \log p_i > m$ then $\sum_{i=1}^\omega l_i> m$. The second and third steps follow from definitions 
\begin{equation}
   S_\varphi:= \left\{ l\in \mathbb{N}^\omega: \sum_{i=1}^\omega l_i = \varphi \text{, }l_1\leq l_2 \leq \cdots \leq l_\omega \text{ and }  v_l(j)\leq \pi(2^{j}-1)-\pi(2^{j-1}-1) \,\forall j\in[1:m]\right \}
\end{equation}

and, 
\begin{equation}
    G_l(j):= \left(\mathbbm{1}\{j< C_K\}\binom{\pi(2^{j}-1)-\pi(2^{j-1}-1)}{v_l(j)}+\mathbbm{1}\{j\geq C_K\}\binom{\left \lceil 2^j \left(\frac{0.76j-1.26}{j (j-1)} \right)\right \rceil}{v_l(j)}\right) 
\end{equation}
The upper bound can be rewritten as,
\begin{equation}
    1 -  \left ( \frac{\sum_{\varphi=1}^{m}3^{\varphi}\left(\sum_{l\in S_{\varphi}} 3^{-\varphi}\prod_{j: v_l(j)\neq 0}  \frac{G_l(j)}{j^{v_l(j)}}\right)}{\sum_{\varphi=1}^{m+\omega}3^{\varphi}\left(\sum_{l\in S_{\varphi}}3^{-\varphi}\prod_{j: v_l(j)\neq 0}  \frac{G_l(j)}{j^{v_l(j)}}\right)}\right)
\end{equation}
Furthermore, defining $A(\varphi):=\sum_{l\in S_{\varphi}} 3^{-\varphi}\prod_{j: v_l(j)\neq 0}  \frac{G_l(j)}{j^{v_l(j)}}$, the previous expression becomes
\begin{equation}
    1 -  \left ( \frac{\sum_{\varphi=1}^{m}3^{\varphi} A(\varphi)}{\sum_{\varphi=1}^{m+\omega}3^{\varphi}A(\varphi)}\right)
\end{equation}
Hence, the probability of rejection can be expressed as
\begin{align}
    \lim_{m\rightarrow \infty}\mathrm{Prob}\left( \sum_{i=1}^\omega \log p_i >m\right)&\leq 1 -  \lim_{m\rightarrow \infty}\left ( \frac{\sum_{\varphi=1}^{m}3^{\varphi} A(\varphi)}{\sum_{\varphi=1}^{m+\omega}3^{\varphi}A(\varphi)}\right)\nonumber\\&=  1-\lim_{m\rightarrow \infty}\frac{\sum_{\varphi=1}^{m}3^{\varphi} A(\varphi)}{\sum_{\varphi=1}^{m+1}3^{\varphi} A(\varphi)}\frac{\sum_{\varphi=1}^{m+1}3^{\varphi} A(\varphi)}{\sum_{\varphi=1}^{m+2}3^{\varphi} A(\varphi)}\cdots\frac{\sum_{\varphi=1}^{m+\omega-1}3^{\varphi} A(\varphi)}{\sum_{\varphi=1}^{m+\omega}3^{\varphi} A(\varphi)}\nonumber\\&=1 -  \left ( \lim_{m\rightarrow \infty}\frac{\sum_{\varphi=1}^{m}3^{\varphi} A(\varphi)}{\sum_{\varphi=1}^{m+1}3^{\varphi}A(\varphi)}\right)^{\omega}
\end{align}
where the last step uses that 
\begin{equation}
    \lim_{m\rightarrow \infty}\left ( \frac{\sum_{\varphi=1}^{m}3^{\varphi} A(\varphi)}{\sum_{\varphi=1}^{m+1}3^{\varphi}A(\varphi)}\right)= \lim_{m\rightarrow \infty}\left ( \frac{\sum_{\varphi=1}^{m+c}3^{\varphi} A(\varphi)}{\sum_{\varphi=1}^{m+c+1}3^{\varphi}A(\varphi)}\right)
\end{equation}
for any constant $c\in \mathbb{N}$.

As it is shown later, $A(\varphi)=O(\left(\frac{2}{3}\right)^{\varphi}\varphi^{\omega-1})$ and  $A(\varphi)=\Omega(\left(\frac{2}{3}\right)^{\varphi}\varphi^{-2\omega})$, this implies that there exists a value $\varphi_0$ such that the function $A(\varphi)$ is decreasing for all $\varphi\geq \varphi_0$. Therefore,
\begin{align}
    3 \sum_{\varphi=\varphi_0}^{m} 3^{\varphi} A(\varphi)&= \sum_{\varphi=\varphi_0+1}^{m+1} 3^{\varphi} A(\varphi-1) \nonumber\\ & \geq  \sum_{\varphi=1}^{m+1} 3^{\varphi} A(\varphi)-\sum_{\varphi=1}^{\varphi_0}3^{\varphi} A(\varphi)
\end{align}
or equivalently, 
\begin{align}
    \frac{ \sum_{\varphi=1}^{m} 3^{\varphi} A(\varphi)}{\sum_{\varphi=1}^{m+1} 3^{\varphi} A(\varphi)}\geq  \frac{1}{3}+\frac{\sum_{\varphi=1}^{\varphi_0-1}3^{\varphi} A(\varphi)-\sum_{\varphi=1}^{\varphi_0}3^{\varphi-1} A(\varphi)}{\sum_{\varphi=1}^{m+1} 3^{\varphi} A(\varphi)}
\end{align}
and since $A(\varphi)=\Omega(\left(\frac{2}{3}\right)^{\varphi}\varphi^{-2\omega})$ then ${\sum_{\varphi=1}^{m} 3^{\varphi} A(\varphi)}\rightarrow \infty$ as $m\rightarrow \infty$. Using this, the probability of rejection can be upper bounded as 
\begin{equation}
    \lim_{m\rightarrow \infty}\mathrm{Prob}\left( \sum_{i=1}^\omega \log p_i >m\right) \leq  1- \frac{1}{3^\omega}  
\end{equation}
That is, the probability of rejection is bounded by a constant w.r.t. $m$.
The only thing left is demonstrating the bounds used for $A(\varphi)$. First, we prove the upper bound of $A(\varphi)$.  
\begin{align}
    A(\varphi)&\leq \sum_{l\in S_\varphi}3^{-\varphi} \prod_{\substack{j=1\\v_l(j)\neq 0}}^{C_K-1} \frac{1}{j^{v_l(j)}}  \left(e\cdot \frac{\pi(2^{j}-1)-\pi(2^{j-1}-1)}{v_l(j)}\right)^{v_l(j)} \prod_{\substack{j=C_K\\v_l(j)\neq 0}}^m \frac{1}{j^{v_l(j)}} \left(\frac{e \,\left \lceil 2^j \left(\frac{0.76j-1.26}{j (j-1)} \right)\right \rceil}{v_l(j)}\right)^{v_l(j)}\nonumber\\ &\leq \sum_{l\in S_{\varphi}} \left(\frac{2}{3}\right)^\varphi e^\omega \,\Phi_{\text{max}} ^{\sum_{j=1}^{C_K-1}v_l(j)} \prod_{l_i\geq C_K} \frac{1}{2^{l_i} l_i} \left\lceil 2^{l_i} \left(\frac{0.76l_i-1.26 }{l_i (l_i-1)} \right)\right \rceil\nonumber \\ &\leq \sum_{l\in S_{\varphi}}  \left(\frac{2}{3}\right)^\varphi e^\omega \max\{1,\Phi_{\text{max}} ^{\omega}\} \prod_{l_i\geq C_K} \frac{1}{2^{l_i}l_i} \left( 2^{l_i} \left(\frac{0.76l_i-1.26 }{l_i (l_i-1)} \right)+1\right)\nonumber \\ & = \sum_{l\in S_{\varphi}}  \left(\frac{2}{3}\right)^\varphi e^\omega \max\{1,\Phi_{\text{max}} ^{\omega}\} \prod_{l_i\geq C_K}  \left(  \frac{0.76l_i-1.26 }{l_i^2 (l_i-1)} +\frac{1}{l_i 2^{l_i}}\right) \nonumber \\&\leq \sum_{l\in S_{\varphi}} \left(\frac{2}{3}\right)^\varphi e^\omega \max\{1,\Phi_{\text{max}} ^{\omega}\} \prod_{l_i\geq C_K}  \left(  \frac{0.76}{2} + \frac{1}{8}\right)\leq |S_\varphi| \left(\frac{2}{3}\right)^\varphi e^\omega \max\{1,\Phi_{\text{max}} ^{\omega}\}   
\end{align}
where for the first inequality we use $\binom{a}{b}\leq \left(\frac{a e}{b}\right)^b$. The second inequality uses the definition $\Phi_{\text{max}}:= \max_{j\in[1:C_K-1]} \frac{\pi(2^{j}-1)-\pi(2^{j-1}-1)}{j2^j}$ and inequality $(1/v_l(j))^{v_l(j)}\leq 1$ if $v_l(j)\neq 0$. Next, as $\sum_{j=1}^m v_l(j)= \omega$, it follows that
\begin{equation}
    \Phi_{\text{max}} ^{\sum_{j=1}^{C_K-1}v_l(j)}\leq \max\{1,\Phi_{\text{max}}^\omega \}
\end{equation}
For the fourth inequality is used that $l_i> 2$ and $0.76\,l_i-1.26\leq 0.76\,l_i$.
Finally, using that the number of integer solutions of $\sum_{i=1}^\omega l_i =\varphi$, with $l_i\geq 1$ is given by $\binom{\varphi-1}{\omega-1}$, then
\begin{equation}
    |S_\varphi|\leq \binom{\varphi-1}{\omega-1}
\end{equation}
Consequently, $|S_\varphi|=O(\varphi^{\omega-1})$, which implies that $A(\varphi)=O(\left(\frac{2}{3}\right)^{\varphi}\varphi^{\omega-1})$. Next, we obtain the lower bound of $A(\varphi)$. 
\begin{align}
    A(\varphi)\geq \sum_{l\in S_\varphi}&3^{-\varphi} \prod_{\substack{j=1\\v_l(j)\neq 0}}^{C_K-1} \frac{1}{j^{v_l(j)}}  \left(\frac{\pi(2^{j}-1)-\pi(2^{j-1}-1)}{v_l(j)}\right)^{v_l(j)} \prod_{\substack{j=C_K\\v_l(j)\neq 0}}^m \frac{1}{j^{v_l(j)}} \left(\frac{\left \lceil 2^j \left(\frac{0.76j-1.26}{j (j-1)} \right)\right \rceil}{v_l(j)}\right)^{v_l(j)}
    \nonumber\\&\geq \sum_{l\in S_\varphi} \left(\frac{2}{3}\right)^{\varphi}  \frac{\Phi_{\text{min}} ^{\sum_{j=1}^{C_K-1}v_l(j)}}{\omega^\omega} \prod_{l_i\geq C_K} \frac{0.76l_i-1.26}{l_i^2(l_i-1)}\nonumber\\&\geq \sum_{l\in S_\varphi} \left(\frac{2}{3}\right)^{\varphi}   \frac{\Phi_{\text{min}}^\omega}{\omega^\omega} \left(\frac{\omega}{\varphi}\right)^{2\omega}0.55^{\omega} = \frac{|S_\varphi|}{\varphi^{2\omega}}\left(\frac{2}{3}\right)^{\varphi}   \left(0.55\,\Phi_{\text{min}}\omega\right)^{\omega}
    \nonumber \\ & \geq \frac{1}{\varphi^{2\omega}}\left(\frac{2}{3}\right)^{\varphi}   \left(0.55\,\Phi_{\text{min}}\omega\right)^{\omega}
\end{align}
where the second inequality follows from the definition $\Phi_{\text{min}}:= \min_{j\in[1:C_K-1]} \frac{\pi(2^{j}-1)-\pi(2^{j-1}-1)}{j2^j}$, $v_l(j)\leq \omega$, and the inequality $\lceil u \rceil \geq u$. The third inequality uses
\begin{equation}
     \prod_{l_i\geq C_K} \frac{0.76l_i-1.26}{l_i^2(l_i-1)}\geq \prod_{l_i\geq C_K} \left(0.76-\frac{1.26}{l_i}\right) \prod_{i=1}^\omega \frac{1}{l_i^2},
\end{equation}
and the inequality between the geometric mean and the arithmetic mean,
\begin{equation}
    \left( \prod_{j=1}^\omega l_j \right)^{\frac{1}{\omega}}\leq \frac{1}{\omega} \sum_{j=1}^\omega l_j = \frac{\varphi}{\omega}
\end{equation}

\subsubsection{Step 3: Multiplicity of the prime factors}

The last step generates the multiplicity of the different prime factors, i.e., vector $r=[r_1,r_2,\cdots, r_\omega]^T$. In particular, each vector $r$ that satisfies 
\begin{equation}\label{cod_1}
    \prod_{i=1}^{\omega} p_i^{r_i} < 2^m
\end{equation}
and $r_i\geq1$ for all $i\in[1:\omega]$ is equally probable. To sample from this distribution, note that $r_i\leq m$ for all $i\in [1,\omega]$, i.e., there are at most $m^\omega$ values for the vector $r$. Therefore, we can generate in polynomial time a list of all vectors that satisfy \eqref{cod_1} by checking all vectors $r$ such that $1 \leq r_i\leq m$ for all $i\in [1:\omega]$.

Finally, the number of elements in the list is computed, and we decide $r$ by sampling from a uniform distribution over $[1:N_T]$, where $N_T$ is the number of elements in the list. Note that $N_T=|N_{p_1,\cdots,p_\omega}(2^m)|$.

\subsubsection{Probability Distribution}

In a nutshell, the proposed algorithm runs in polynomial time in $m$, and the distribution satisfies 
\begin{align}
    \frac{\text{Prob}_{X\sim D_m}(x)}{\text{Prob}_{X\sim U_m}(x)}\geq  \left(\frac{9}{11} \right)^2 \frac{\Xi(K)^{K^2}}{2^K}
\end{align}
where $U_m$ denotes the uniform distribution over $\{x\in \mathbb{N}: b(x)\leq m \text{ and }\omega(x)\leq K\}$. Therefore, $D_m$ satisfies expression $\eqref{eq:prop_suff_inf}$. 

The previous inequality follows directly from inequalities \eqref{inequality_1}, \eqref{inequality_2}, \eqref{inequality_3}, \eqref{distribution_equation}, and
\begin{equation}
    \mathrm{Prob}_{X\sim U_m} (x) = \left( \frac{\pi_\omega(2^m)}{\sum_{\omega=1}^K \pi_\omega(2^m)} \right) \left(\frac{|N_{p_1,\cdots,p_\omega}(2^m)|}{\pi_{\omega}(2^m)}\right) \left(\frac{1}{|N_{p_1,\cdots,p_\omega}(2^m)|}\right)
\end{equation}

\end{appendices}
\end{document}